\documentclass[aps,pra,showpacs,twoside,twocolumn,nofootinbib,10pt]{revtex4-2}
\usepackage[colorlinks=true, citecolor=red, urlcolor=blue ]{hyperref}
\usepackage{times,epsfig,amssymb,amsfonts,amsmath,bm,subfigure,mathtools,amsthm,braket,soul,enumitem,color,physics,graphics,graphicx}
\usepackage[normalem]{ulem}
\usepackage{comment}
\usepackage{epstopdf}
\usepackage{multirow}
\usepackage{booktabs}
\usepackage{algpseudocode}
\newtheorem{theorem}{Theorem}
\newtheorem{corollary}{Corollary}

\newtheorem{proposition}{Proposition}
\newtheorem{remark}{Remark}

\newtheorem{lemma}{Lemma}

\usepackage{optidef}
\usepackage{xurl}

\newcommand{\F}{\mathbb F}

\newcommand{\Z}{\mathbb Z}
\renewcommand{\Tr}{\operatorname{Tr}}
\newcommand{\conv}{\operatorname{conv}}
\newcommand{\Sp}{\operatorname{Sp}}
\newcommand{\ASp}{\operatorname{ASp}}
\newcommand{\AGL}{\operatorname{AGL}}

\newcommand{\rad}{\operatorname{rad}}
\newcommand{\BD}{\operatorname{BD}}
\newcommand{\SEP}{\operatorname{SEP}}

\newcommand{\APPT}{\operatorname{APPT}}
\newcommand{\PPPT}{\operatorname{PermPPT}}
\newcommand{\hex}[1]{\mathtt{#1}}

\begin{document}

\title{On the separability of Bell-diagonal states with absolutely positive partial transposition}

\author{Ayan Patra\(^{1}\), Beatrix C. Hiesmayr\(^{1,2}\)}
\affiliation{\(^1\)IT:U Interdisciplinary Transformation University, Freistädter Strasse 400, 4040 Linz, Austria\\
\(^2\)University of Vienna, Faculty of Physics, Währingerstrasse 17, 1090 Vienna, Austria}

\begin{abstract}
Absolutely positive partial transpose ($\APPT$) states, which remain PPT under every global unitary, provide a spectral relaxation of absolutely separable states, which remain separable under every such transformation. Despite substantial progress, whether these two notions coincide beyond qubit--qudit systems remains unresolved, motivating the study of structured state families in which the $\APPT$ spectral constraints and separability can be analyzed exactly. We study this problem for Bell-diagonal states in two settings: the standard two-qutrit Weyl–Heisenberg basis and the two-ququart lattice basis generated by two-qubit Pauli operators. In the two-qutrit case, we show that a linearized necessary condition for $\APPT$ defines a polytope of spectra in which every associated Weyl--Heisenberg Bell-diagonal state admits an exact decomposition into explicitly separable states. This yields a separability criterion that applies to a region even larger than the $\APPT$ set itself. In the two-ququart lattice Bell-diagonal setting, we show that a state remains PPT under every permutation of its Bell coefficients if and only if its six largest Bell coefficients sum to at most $1/2$, and we prove that all such states are separable. Equivalently, every entangled lattice Bell-diagonal state becomes NPT after a suitable coefficient permutation and is therefore not $\APPT$. The proofs combine convex-polytope analysis, affine and symplectic orbit reductions, and exact rational decompositions into separable states supported on affine Lagrangian subsets. Our results rule out entangled $\APPT$ states in both Bell-diagonal families, provide simple spectral inequalities as sufficient criteria for separability, and show that, within the lattice Bell-diagonal family, PPT under all Bell-coefficient permutations is already enough to guarantee separability. Being basis-specific, however, they do not by themselves settle whether $\APPT$ and absolute separability coincide.
\end{abstract}	

\maketitle

\section{Introduction}
\label{sec:introduction}
In the second quantum revolution, entanglement has become central to quantum information theory, fueling advantages in several quantum technologies, including communication~\cite{BennettWiesner1992,BennettEtAl1993}, computation~\cite{JozsaLinden2003,RaussendorfBriegel2001}, sensing~\cite{GiovannettiEtAl2004,GiovannettiLloydMaccone2011,PezzeEtAl2018}, and cryptography~\cite{Ekert1991,GisinEtAl2002}. Nevertheless, determining whether a general mixed bipartite state is separable is an NP-hard problem~\cite{Gurvits2003,Gharibian2010}. Although the positive-partial-transpose (PPT) criterion provides a necessary and sufficient condition for separability in \(2\otimes2\) and \(2\otimes3\) systems~\cite{Peres1996,Horodecki1996}, PPT-entangled states exist in higher dimensions~\cite{Horodecki_1997,Horodecki2009}.

A related spectral characterization was introduced by M. Ku\'s and K. \.Zyczkowski in their geometric study of entangled states~\cite{KusZyczkowski2001}: a state is called \emph{absolutely separable} if it remains separable under every global unitary transformation. In particular, for a $d\otimes d$ system, any state $\rho$ satisfying $\operatorname{Tr}\rho^2\le\frac{1}{d^2-1}$ is separable~\cite{karol1998,Gurvits2002}. Since this condition is unitarily invariant, any state satisfying it is separable under all global unitaries. Replacing separability with the PPT condition yields the notion of \emph{absolutely PPT} ($\APPT$) states: a state $\rho$ is $\APPT$ if $(U\rho U^\dagger)^{\Gamma_A}\geq0$ for every global unitary $U$. Both absolute separability and the $\APPT$ property depend only on the spectrum of $\rho$, and every absolutely separable state is necessarily $\APPT$.

Hildebrand~\cite{Hildebrand2007} obtained a complete characterization of $\APPT$ spectra in arbitrary bipartite dimensions in terms of a finite collection of linear matrix inequalities (LMIs) on the ordered eigenvalues. Johnston~\cite{Johnston2013} subsequently proved that absolute separability and the $\APPT$ property coincide in every $2\otimes d$ system, with $d\geq2$. Beyond the qubit--qudit case, Arunachalam, Johnston, and Russo established the equivalence of the $\APPT$ property and absolute separability for the isotropic-state family~\cite{ArunachalamJohnstonRusso2015}. Several works have investigated the limitations of standard entanglement criteria on $\APPT$ states and developed sufficient criteria for absolute separability
\cite{ArunachalamJohnstonRusso2015,AbellanetVidal2025}, while others have studied the geometry and extreme points of the sets of $\APPT$ and absolutely separable states~\cite{SongChen2025,AhiableKothakondaWinter2026}. Despite this progress, whether the two notions coincide remains an open problem. In other words, it is not known whether
\begin{equation*}
\mathrm{APPT}_{d_1,d_2}=\mathrm{AS}_{d_1,d_2}
\quad \text{for all } d_1,d_2\geq3,
\end{equation*}
where $\APPT_{d_1,d_2}$ and $\mathrm{AS}_{d_1,d_2}$ stand for the sets of absolutely PPT and absolutely separable states in $d_1\otimes d_2$, respectively. 

One way to approach this problem is to fix an orthonormal basis and determine whether every $\APPT$ state diagonal in that basis is separable. Such a basis-specific result does not by itself establish absolute separability, since an arbitrary global-unitary conjugate of a diagonal state need not remain diagonal in the chosen basis. It nevertheless provides a useful testing ground: an entangled $\APPT$ state diagonal in some basis would immediately disprove the conjectured equivalence, whereas a positive result would exclude $\APPT$ entanglement from a nontrivial family and might reveal structural reasons why spectral PPT conditions force separability.

Bell-diagonal states, which are convex mixtures of Bell states, are especially suitable for this approach. The mixing weights of the Bell states, which we call the Bell coefficients, are precisely the eigenvalues of the state. Therefore, spectral properties such as $\APPT$ can be analyzed directly in terms of these coefficients. In this work, we consider the Bell states generated by Weyl--Heisenberg and Pauli operators, which are standard unitary error bases for qudit and multiqubit systems. These operators provide a natural description of discrete quantum errors and underlie several quantum information theoretic tasks, including teleportation and dense coding~\cite{BennettWiesner1992,BennettEtAl1993,Werner2001}, stabilizer quantum error correction~\cite{Knill1996,Gottesman1997,Klappenecker2002}, and Clifford circuit formalism~\cite{Dehaene2003,Hostens2005}. Beyond this operational role, their commutation relations equip the Bell-state labels with a finite symplectic phase-space geometry. These structures have been used extensively in studies of the two-qutrit magic simplex and higher-dimensional lattice states~\cite{Baumgartner2006,Baumgartner2007,Popp2022,BenattiLiguori2007,BenattiKarbalaii2013}. Bell-diagonal families therefore combine a manageable spectral description, direct relevance to quantum information theoretic protocols, and a sufficiently rich entanglement structure.

In particular, we investigate whether entangled $\APPT$ Bell-diagonal states exist in two structurally distinct settings: the standard Weyl--Heisenberg Bell basis in $3\otimes3$~\cite{Baumgartner2006,Baumgartner2007,Popp2022} and the lattice Bell basis generated by two-qubit Pauli operators in $4\otimes4$~\cite{BenattiLiguori2007,BenattiKarbalaii2013}. The two-qutrit basis is labeled by the symplectic space $\mathbb F_3^2$, whereas the lattice basis is labeled by $\mathbb F_2^4$. These two constructions require separate analyses; in particular, the lattice basis is not equivalent under local unitaries to the standard ququart Weyl--Heisenberg basis. Note that the two-qutrit setting is already known to contain a substantial PPT-entangled region: a numerical study based on uniformly sampled Bell-diagonal states identified at least \(13.9\%\) of the Bell-diagonal PPT states as bound entangled, while the status of a further $5.1\%$ remained undetermined~\cite{Popp2022}. It is therefore nontrivial whether the stronger $\APPT$ requirement excludes entanglement from this family. A recent review on bipartite bound entanglement can be found in Ref.~\cite{Hiesmayr2025}.

For the two-qutrit Weyl--Heisenberg Bell simplex~\cite{Baumgartner2007}, we show that a linear inequality on the eigenvalues of density matrices serves as a sufficient condition for separability, and use it to show that every $\APPT$ state diagonal in the standard two-qutrit Weyl--Heisenberg Bell basis is separable. For the $4\otimes4$ lattice Bell simplex, we prove a stronger result, which we call permutation rigidity. We call a coefficient vector associated with a lattice Bell-diagonal state \emph{permutation-PPT} \((\PPPT)\) if the state remains PPT for every permutation of its sixteen Bell coefficients. We show that this holds if and only if the sum of the six largest coefficients does not exceed $1/2$. Thereafter, we show that every \(\PPPT\) lattice Bell-diagonal state is separable; equivalently, every entangled lattice Bell-diagonal state becomes NPT after a suitable permutation of its Bell coefficients. Since every permutation of the lattice Bell coefficients is implemented by a global unitary that permutes the Bell basis states, $\APPT$ implies \(\PPPT\). The preceding result therefore implies that all $\APPT$ lattice Bell-diagonal states in $4\otimes4$ are separable.

The paper is organized as follows. Section~\ref{sec:preli} presents the Weyl--Heisenberg and Pauli Bell bases, the notions of $\APPT$ and \(\PPPT\), the relevant affine symmetry groups. This section also includes an important lemma showing that certain Bell-diagonal states, defined through the symplectic phase-space geometry of the Bell-state labels, are separable. Section~\ref{sec:3x3results} proves the two-qutrit result through a spectral outer region, affine-orbit reduction, and exact separability certificates. Section~\ref{sec:4x4results} gives the spectral characterization of the \(\PPPT\) polytope for the $4\otimes4$ lattice Bell-diagonal states and proves the separability of this polytope. Finally, Section~\ref{sec:conclusion} presents concluding remarks. Detailed supporting arguments, symmetry analysis, and computational procedures are provided in the appendices.

\section{Preliminary}
\label{sec:preli}
In this section, we briefly review the Bell projectors in two-qutrit and two-ququart systems. For two-qutrit systems, we use the standard Weyl--Heisenberg unitaries, while for two-ququart systems, we identify each local ququart with two qubits and use the Pauli matrix algebra. We then state a lemma establishing the separability of a specific family of Bell-diagonal states, defined through the symplectic phase-space geometry of the Bell-state labels, which applies to both systems. Finally, we present the notions of absolutely PPT ($\APPT$) and permutation-PPT ($\PPPT$) states, together with their spectral properties.

\subsection{Bell projectors and twirling}
\label{subsec:bellprojectors_twril}
Let $\mathcal{J}$ be an index set with cardinality $\vert{}\mathcal{J}\vert{} = d$, and let $\{\ket{j} : j \in \mathcal{J}\}$ denote an orthonormal basis for $\mathbb{C}^d$. The reference maximally entangled state (Bell state) in $\mathbb{C}^d \otimes \mathbb{C}^d$ (abbreviated $d \otimes d$) and its associated projector are defined by
\begin{equation}
\label{eq:maximally-entangled-state}
\ket{\Omega_d} = \frac{1}{\sqrt{d}} \sum_{j \in \mathcal{J}} \ket{j}_A \ket{j}_B, \qquad P_0 = \ketbra{\Omega_d}{\Omega_d}.
\end{equation}
Let $\{U_x : x \in \Lambda\}$ be a unitary error basis on $\mathbb{C}^d$, with index set $\Lambda$ of cardinality $\vert{}\Lambda\vert{} = d^2$, satisfying the orthogonality condition $\Tr(U_x^\dagger U_y) = d  \delta_{xy}$. Unitary error bases are in one-to-one correspondence with orthonormal maximally entangled
bases~\cite{Werner2001}. The associated Bell states and projectors are defined by
\begin{equation}
\label{eq:general-bell}
\ket{\Omega_x} := (I_d \otimes U_x) \ket{\Omega_d}, \qquad P_x := \ketbra{\Omega_x}{\Omega_x}.
\end{equation}
By the orthogonality condition, \(\{P_x\}_{x\in\Lambda}\) is a decomposition of the identity into mutually orthogonal rank-one projectors. For any probability distribution $c=\{c_x\in\mathbb{R}^{d^2}:x\in\Lambda\}$ satisfying $c_x\geq 0$ and $\sum_{x\in\Lambda}c_x=1$, the associated Bell-diagonal state is defined as
\begin{equation}\label{eq:general-BD}
 \rho(c)=\sum_{x\in\Lambda}c_xP_x,
\end{equation}
where the set of probabilities $c$ constitutes the spectrum of $\rho(c)$. We denote the set of all $d \otimes d$ Bell-diagonal states by $\BD(d)$. For the Weyl and Pauli systems (defined in the next subsection), where the unitary operators satisfy \(U_xU_y=e^{i\phi(x,y)}U_yU_x\), the correlated twirl operation is defined by 
\begin{equation}
\label{eq:general-twirl}
 \mathcal T(A)=\frac1{d^2}\sum_{x\in\Lambda}
 (U_x^*\otimes U_x)A(U_x^T\otimes U_x^\dagger).
\end{equation}
Alternatively, in these systems, the twirling channel can be expressed in the corresponding Bell basis as
\begin{equation}
\label{eq:twirl-pinching}
 \mathcal T(A)=\sum_x\Tr(P_xA)P_x,
\end{equation}
which is the Bell-basis projection associated with the unitary error basis~\cite{Werner2001}. It is a shared-randomness mixture of local unitaries and therefore preserves separability. In particular, the Bell coefficients $\{c_x\}_{x\in\Lambda}$ of a twirled product state $\mathcal T(\ketbra{a,b}{a,b})$ are the probabilities
\(c_x=|\braket{\Omega_x}{a,b}|^2\).

\subsection{Two particular unitary error bases}
\label{subsec:errorbases}
\textit{\textbf{Two-qutrit Bell states.}} To define two-qutrit Bell states, we use the notion of Heisenberg-Weyl unitary error basis. Let \(\omega=e^{2\pi i/d}\) and
\begin{equation}\label{eq:qutrit-Weyl}
 X\ket{j}=\ket{ j+1\pmod d},
 \qquad Z\ket{ j}=\omega^j\ket{ j}.
\end{equation}
Let $V_d := \mathbb{F}_d^2$ denote the discrete phase space. For any element $x = (k,l) \in V_d$, the associated Weyl–Heisenberg displacement operator $U_x = X^l Z^k$ yields the Bell basis state
\begin{equation}
\label{eq:qutrit-Bell}
\ket{\Omega_{x=(k,l)}} = (I_d \otimes U_x) \ket{\Omega_d} = \frac{1}{\sqrt{d}} \sum_{j=0}^{d-1} \omega^{kj} \ket{j, j+l}.
\end{equation}

$V_{d}$ is a symplectic vector space equipped with a nondegenerate antisymmetric bilinear symplectic product $[.,.]:V_{d}\times V_{d}\to\F_d$. For two elements $x,y\in\F_d^{2}$ with $y=(k',l')$, the symplectic product is defined as
\begin{equation}
\label{eq:qutrit-symp}
 [(k,l),(k',l')]=kl'-lk'\pmod d,
\end{equation}
which encodes the Weyl commutation relation
\begin{equation}
    \label{eq:weyl_comm-3x3}
    U_xU_y=\omega^{[x,y]}U_yU_x.
\end{equation}
A Lagrangian subspace $\mathcal{L}_{d}$ of a symplectic vector space $V_{d}$ is the maximal subspace satisfying $[x,y]=0~\forall~x,y\in \mathcal{L}_{d}\cap V_{d}$. Every one-dimensional subspace of \(V_3\) is Lagrangian and contains three points. There are four such $\mathcal{L}_{3}$ subspaces and three affine cosets of each, therefore, twelve affine Lagrangian phase-space lines.\\

\textit{\textbf{Two-ququart Bell states.}} For $4 \otimes 4$ Bell states, we adopt the stabilizer lattice-basis formalism by identifying each local ququart with two qubits~\cite{Dehaene2003, Hostens2005, BenattiLiguori2007, BenattiKarbalaii2013}. More generally, a $2^n \otimes 2^n$ Bell state, in this formalism, can be considered as a collection of $n$ distributed two-qubit Bell pairs. Specifically, the $i$-th pair is shared between two-dimensional local subsystems $A_i$ and $B_i$, such that the composite $2^n \otimes 2^n$ bipartite Bell state resides across the composite parties $A = A_1 A_2 \cdots A_n$ and $B = B_1 B_2 \cdots B_n$. Within this binary phase-space representation, we define the set and an element of it as
\begin{equation}
\label{eq:binary-labels}
V_{2n} := \mathbb{F}_2^{2n},\quad x = (p_1, q_1, \dots, p_n, q_n)\in V_{2n}.
\end{equation}
Utilizing the Hermitian single-qubit Pauli operators $Q_{p, q} = i^{pq} X^p Z^q$ with $(p, q) \in \mathbb{F}_2^2$, we construct the $n$-qubit Pauli strings
\begin{equation}
\label{eq:binary-pauli}
Q_x = \bigotimes_{i=1}^n Q_{p_i, q_i},
\end{equation}
which generate the orthonormal Bell basis
\begin{align}
\label{eq:ququart_Bell}
\ket{\Omega_{x=(p_i,q_i)^{\otimes n}}}_{AB} &= \left(I_{2^n}^A \otimes Q_x^B\right) \ket{\Omega_{2^n}}_{AB}  \\ &= S \left( \bigotimes_{i=1}^n (I_2^{A_i} \otimes Q_{p_i, q_i}^{B_i}) \ket{\Phi^+}_{A_i B_i} \right).\nonumber
\end{align}
Here, $\ket{\Omega_{2^n}} = \frac{1}{\sqrt{2^n}} \sum_{j \in \mathbb{F}_2^n} \ket{j}_A \ket{j}_B$ denotes the reference maximally entangled state, with $j = (j_1, \dots, j_n) \in \mathbb{F}_2^n$, and $\ket{\Phi^+} = \frac{1}{\sqrt{2}}(\ket{00} + \ket{11})$. The unitary operator $S$ reorders the subsystem tensor factors from $(A_1 B_1 \cdots A_n B_n)$ to $(A_1 \cdots A_n B_1 \cdots B_n)$.

In this setting, $V_{2n}$ is a symplectic vector space equipped with symplectic product $[.,.]:V_{2n}\times V_{2n}\to\F_2$. Given two elements $x,y\in\F_2^{2n}$ in phase-space coordinates with $y=(p_1',q_1',...,p_n',q_n')$, the symplectic product is defined as
\begin{align}
 [x,y]&=\sum_{i=1}^n p_iq_i'+q_ip_i'\pmod2,
 \label{eq:binary-symp}
\end{align}
which satisfies
\begin{equation}
\label{eq:q-refinement}
 Q_xQ_y=(-1)^{[x,y]}Q_yQ_x.
\end{equation}
Here the addition is bitwise XOR\footnote{Here and throughout, addition in \(\mathbb F_2^{2n}\) means bitwise exclusive OR (XOR), namely, componentwise addition modulo \(2\). Thus, for \(x,y\in\mathbb F_2^{2n}\), \( (x+y)_i=x_i+y_i\pmod 2, \)
so that \(0+0=1+1=0\) and \(0+1=1+0=1\).}. A Lagrangian subspace of \(V_4\) has dimension two, therefore $\mathcal{L}_4$ has four elements. There are fifteen such subspaces, each admitting four affine cosets, which yields a total of sixty affine Lagrangian planes. 
\subsection{Lagrangian-coset separability}
\label{subsec:lagra-coset-separability}
A useful consequence of the phase-space formulation is that it identifies a natural family of separable Bell-diagonal states. In both the two-qutrit Weyl system and the two-ququart binary-lattice system, a Lagrangian subspace labels a commuting family of operators, and the uniform mixture of Bell projectors supported on any affine coset of this subspace is separable. This common property is stated in the following lemma.
\begin{lemma}
\label{lemma:Coset_separability}
[Lagrangian-coset separability] Let \(\mathcal{L}\) be a Lagrangian subspace for either the qutrit Weyl--Heisenberg system or the binary two-qubit Pauli system. Thus, \(|\mathcal L|=d,\) where \(d=3\) for the qutrit system and \(d=4\) for the binary two-qubit system. For every affine coset \(L=t+\mathcal L\), the Bell-diagonal state
\[
\sigma_L=\frac{1}{d}\sum_{x\in L}P_x
\]
is separable.
\end{lemma}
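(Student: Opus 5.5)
The plan is to exhibit $\sigma_L$ explicitly as the twirl of a product state, using \eqref{eq:twirl-pinching}. Twirling is a shared-randomness mixture of local unitaries, so $\mathcal T(\ketbra{a,b}{a,b})$ is separable. It therefore suffices to find a product vector $\ket{a,b}$ whose Bell coefficients $c_x=|\braket{\Omega_x}{a,b}|^2$ equal $1/d$ on $L$ and vanish off $L$.

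\textbf{Step 1: reduce to the linear subspace $\mathcal L$.} For $t\in\Lambda$, the operator $I\otimes U_t$ maps $P_x$ to $P_{t+x}$, up to a phase on the vector, since $U_tU_x\propto U_{t+x}$ for both the Weyl and the Pauli operators. Hence $\sigma_{t+\mathcal L}=(I\otimes U_t)\sigma_{\mathcal L}(I\otimes U_t)^\dagger$. Local unitaries preserve separability, so it is enough to treat $t=0$.

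\textbf{Step 2: use the commuting family labelled by $\mathcal L$.} Because $\mathcal L$ is Lagrangian, the operators $\{U_x:x\in\mathcal L\}$ commute pairwise, by \eqref{eq:weyl_comm-3x3} and \eqref{eq:q-refinement}. Up to phases they form an abelian group of order $d$ acting on $\mathbb C^d$, and these phases can be chosen consistently, for instance as a cocycle trivialized on $\mathcal L$. Their traces satisfy $\Tr U_x=0$ for $x\neq0$, so the group representation is regular. It follows that there is an orthonormal common eigenbasis $\{\ket{e_j}\}$ with $|\langle e_j|U_x|e_j\rangle|=1$ for every $x\in\mathcal L$. Fix one vector $\ket e$ of this basis. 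We have
\begin{equation*}
\braket{\Omega_x}{a,b}=\tfrac1{\sqrt d}\,\langle \bar a|U_x^\dagger|b\rangle,
\end{equation*}
where $\bar a$ is the complex conjugate in the computational basis. We now choose $\ket a=\overline{\ket e}$ and $\ket b=\ket e$. Then $c_x=\tfrac1d|\langle e|U_x^\dagger|e\rangle|^2=\tfrac1d$ for every $x\in\mathcal L$.

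\textbf{Step 3: normalization.} The coefficients sum to one, and the $d$ entries on $\mathcal L$ already contribute $d\cdot\tfrac1d=1$. Since every $c_x$ is nonnegative, $c_x=0$ for $x\notin\mathcal L$. Therefore $\mathcal T(\ketbra{a,b}{a,b})=\sigma_{\mathcal L}$, which is separable. Combined with Step 1, this proves the lemma for every affine coset.

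\textbf{Main obstacle.} The only delicate point is the existence of the common eigenvector in Step 2 for each relevant $\mathcal L$. In the qutrit case each $\mathcal L$ is generated by a single $U_x$, so any eigenvector of that operator works. In the binary case $\mathcal L$ is generated by two commuting Hermitian Pauli strings, which have a joint eigenbasis. In both cases this is immediate, and only the phase conventions need care. As an alternative, in case the phase bookkeeping becomes messy, I would check the claim directly on representative subspaces; for example, $\mathcal L=\{(k,0)\}$ with $\ket{a,b}=\ket{0,0}$ for qutrits, or the $Z$-type plane for ququarts. The general case then follows by symplectic (Clifford) covariance, since local Clifford unitaries $C^*\otimes C$ permute the Bell projectors according to symplectic maps, which act transitively on Lagrangian subspaces.
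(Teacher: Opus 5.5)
Your proof is correct, and it reaches the lemma by a different mechanism than the paper, although both build on the same common eigenbasis of the Lagrangian family. Your Step 1 coincides with the paper's treatment of affine cosets.

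\textbf{The paper's route.} It writes $\sigma_{\mathcal L}=(I\otimes\mathcal D_{\mathcal L})(P_0)$ with the one-sided dephasing $\mathcal D_{\mathcal L}(A)=\frac1d\sum_{h\in\mathcal L}U_hAU_h^\dagger$. It fixes phases so that $h\mapsto U_h$ is a genuine representation. It then uses orthogonality of $d$ distinct joint characters to identify $\mathcal D_{\mathcal L}$ with complete dephasing in the common eigenbasis. This gives the explicit $d$-term product decomposition $\sigma_{\mathcal L}=\frac1d\sum_j \ket{e_j^*}\!\bra{e_j^*}\otimes\ket{e_j}\!\bra{e_j}$ directly, without invoking the twirl.

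\textbf{Your route.} You show that a single one of these product vectors, $\ket{\bar e}\otimes\ket{e}$, already twirls to $\sigma_{\mathcal L}$. Only the moduli $|\langle e|U_x|e\rangle|=1$ enter. The vanishing of the Bell coefficients off $\mathcal L$ then follows from completeness of the Bell basis and normalization.

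\textbf{What your route buys.} It is independent of phase conventions and of character orthogonality. In particular, you never need the distinctness of the $d$ joint characters. The paper asserts that fact without argument; it is precisely the regularity statement $\Tr U_x=0$ that you mention. Consequently, your remarks on choosing phases consistently and on regularity are superfluous: any common eigenvector of the commuting family suffices.

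\textbf{What it costs.} You rely on the twirl--pinching identity Eq.~\eqref{eq:twirl-pinching} and on the twirl being a shared-randomness mixture of local unitaries. Both are stated in the preliminaries, and this is the same mechanism the paper uses to certify $\mathfrak R_0$. Your separable decomposition is also only implicit: it consists of the $d^2$ images $(U_y^*\otimes U_y)\ket{\bar e}\otimes\ket{e}=\overline{U_y\ket{e}}\otimes U_y\ket{e}$. These depend only on the coset $y+\mathcal L$, so they collapse to the paper's $d$ product vectors.
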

We provide the proof of Lemma~\ref{lemma:Coset_separability} in Appendix~\ref{app:lemma1}. For $d=3$, the lemma gives the twelve affine-line states $\sigma_L=\frac{1}{3}\sum_{x\in L}P_x$, whose convex hull, called the kernel polytope of the two-qutrit magic simplex, consists entirely of separable states~\cite{Baumgartner2006,Baumgartner2007,Popp2022,Sutter2026}. For the \(4\otimes4\) lattice, it gives the sixty separable affine Lagrangian-plane states \( \sigma_L=\frac{1}{4}\sum_{x\in L}P_x.\) In the lattice-state literature, these four-point affine Lagrangian planes are described as special quadruples generated
by commuting Pauli operators~\cite{BenattiKarbalaii2013}. 
\subsection{PPT, absolutely PPT, and PermPPT}

Without loss of generality, let $\Gamma_A$ denote the partial transposition with respect to the first subsystem $A$. A bipartite state $\rho$ on $\mathbb{C}^{d_A} \otimes \mathbb{C}^{d_B}$ has a positive partial transpose (PPT) if $\rho^{\Gamma_A} \geq 0$; otherwise, it is non-positive partial transpose (NPT). The state is called absolutely PPT ($\APPT$) if $(W\rho W^\dagger)^{\Gamma_A} \geq 0$ for all global unitaries $W \in \mathrm{U}(d_A d_B)$, where $\mathrm{U}(d)$ denotes the set of all $d$-dimensional unitary matrices. Similarly, $\rho$ is absolutely separable (AS) if $W\rho W^\dagger$ remains separable under any global unitary $W$. Evidently, both $\APPT$ and AS are purely spectral properties of the density matrix $\rho$. 

Hildebrand~\cite{Hildebrand2007} gave a necessary and sufficient spectral criterion for the $\APPT$ property in arbitrary dimension. For example, in qutrit-qudit $(3\otimes d)$ systems with $d\ge3$, there are two linear matrix inequalities (LMIs) on the eigenvalues of a state that are together necessary and sufficient for the state to be $\APPT$. Specifically, let \( \{\lambda=(\lambda_1,...,\lambda_{3d}):\lambda_1\geq\lambda_2\geq\cdots\geq\lambda_{3d}\geq0,~\sum_i\lambda_i=1\} \) denote the ordered eigenvalues of $\rho$. Then $\rho$ is $\APPT$ if and only if it satisfies both of the following LMIs
\begin{equation}
\label{eq:APPT-3xd}
M_1(\lambda)\succeq0, \qquad M_2(\lambda)\succeq0,
\end{equation}
where
\begin{equation}
\label{eq:APPT_1}
M_1(\lambda)=
\begin{pmatrix}
2\lambda_{3d} & \lambda_{3d-1}-\lambda_1 & \lambda_{3d-3}-\lambda_2 \\
\lambda_{3d-1}-\lambda_1 & 2\lambda_{3d-2} & \lambda_{3d-4}-\lambda_3 \\
\lambda_{3d-3}-\lambda_2 & \lambda_{3d-4}-\lambda_3 & 2\lambda_{3d-5}
\end{pmatrix}
\end{equation}
and
\begin{equation}
\label{eq:APPT_2}
M_2(\lambda)=
\begin{pmatrix}
2\lambda_{3d} & \lambda_{3d-1}-\lambda_1 & \lambda_{3d-2}-\lambda_2 \\
\lambda_{3d-1}-\lambda_1 & 2\lambda_{3d-3} & \lambda_{3d-4}-\lambda_3 \\
\lambda_{3d-2}-\lambda_2 & \lambda_{3d-4}-\lambda_3 & 2\lambda_{3d-5}
\end{pmatrix}.
\end{equation}
However, this approach becomes infeasible in higher dimensions, because the number of LMIs grows exponentially with $\min(d_1,d_2)$ for $d_1\otimes d_2$ dimension.  For instance, Ref.~\cite{XiongSze2026} showed that the \(4\otimes d\) (\(d\ge4\)) $\APPT$ criterion can be expressed through twelve LMIs. Moreover, although these LMIs completely characterize $\APPT$ states in any bipartite dimension, the resulting constraints are nonlinear in the eigenvalues. For example, imposing positivity on the matrices in Eq.~\eqref{eq:APPT-3xd} yields several nonlinear constraints on $\lambda$. As a consequence, the set of $\APPT$ spectra (hereafter, we denote by $\APPT_{d_1,d_2}$ the set of all $d_1\otimes d_2$ APPT states; since APPT is a spectral property, we also use the same symbol for the set of their spectra, and write simply $\APPT$ when the dimensions are clear from context) forms a spectrahedron with curved boundaries, thereby possessing infinitely many extreme points. 

Recently, in Ref.~\cite{AhiableKothakondaWinter2026}, the authors provided a comprehensive characterization of the faces and extreme points of the $\APPT_{d_1,d_2}$ set for arbitrary local dimensions $(d_1, d_2)$. Nevertheless, relaxing these constraints yields a linearized condition that remains necessary, though no longer sufficient. In particular, for a $d_1\otimes d_2$ dimensional bipartite system, the following spectral condition is necessary for $\APPT$~\cite{Hildebrand2007} (see Ref.~\cite{Verstraete2001} for two-qubit case):
\begin{equation}
\lambda_1\le\lambda_{d_1d_2-1}+2\sqrt{\lambda_{d_1d_2}\lambda_{d_1d_2-2}},
\end{equation}
where $\lambda_1\ge\lambda_2\ge\cdots\ge\lambda_{d_1d_2}\ge0$ are the ordered eigenvalues of the state $\rho$ and $\sum_i\lambda_i=1$. Combining this with the AM-GM inequality, $2\sqrt{xy}\le x+y$ for $x,y\ge0$, we obtain the following lemma.
\begin{lemma}
\label{lemma:linearized_necessary_APPT}
[Linearized necessary condition for $\APPT$] Every $d_1\otimes d_2$ $\APPT$ state satisfies
\begin{equation}
\label{eq:APPT_Necessary}
\lambda_1 \leq \lambda_{d_1d_2} + \lambda_{d_1d_2-1} + \lambda_{d_1d_2-2}.
\end{equation}
\end{lemma}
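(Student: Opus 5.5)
The plan is to derive Eq.~\eqref{eq:APPT_Necessary} from the nonlinear necessary condition quoted just before the statement,
\[
\lambda_1\le\lambda_{d_1d_2-1}+2\sqrt{\lambda_{d_1d_2}\lambda_{d_1d_2-2}},
\]
which holds for every $\APPT$ state (Hildebrand's result, taken as given). Write $N=d_1d_2$ and let $\lambda_1\ge\cdots\ge\lambda_N\ge0$ be the ordered spectrum of an $\APPT$ state $\rho$.

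The single remaining step is to bound the geometric-mean term linearly. Since $\lambda_N,\lambda_{N-2}\ge0$, the AM--GM inequality gives
\[
2\sqrt{\lambda_N\lambda_{N-2}}\le\lambda_N+\lambda_{N-2}.
\]
This follows from $(\sqrt{\lambda_N}-\sqrt{\lambda_{N-2}})^2\ge0$. Substituting into the nonlinear condition yields
\[
\lambda_1\le\lambda_{N-1}+\lambda_N+\lambda_{N-2},
\]
which is exactly Eq.~\eqref{eq:APPT_Necessary}.

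The only point needing care is the applicability of the quoted condition. It requires $N\ge3$, so that the indices $N-2,N-1,N$ are meaningful; this holds in every bipartite setting considered in the paper. It also requires the eigenvalues to be labelled in nonincreasing order, which is the convention fixed above.

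No genuine obstacle arises once Hildebrand's condition is taken as given. The substantive content, namely that this inequality follows from PPT of $W\rho W^\dagger$ for a suitably chosen $W$ (for instance, one placing the relevant eigenvectors in a $2\otimes2$ sub-block and applying the two-qubit Verstraete-type bound), is assumed from the cited literature rather than reproved. If a self-contained argument were wanted, that step is the one I would expand.
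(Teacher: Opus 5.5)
Your proposal is correct and follows essentially the same route as the paper. Like the paper, you take Hildebrand's necessary condition $\lambda_{1}\le\lambda_{d_1d_2-1}+2\sqrt{\lambda_{d_1d_2}\lambda_{d_1d_2-2}}$ as given and linearize it with the AM--GM bound $2\sqrt{xy}\le x+y$.
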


We now introduce some notation used throughout the following sections.

\textit{\textbf{Ordered simplex.}} The set of ordered eigenspectra $\Delta_{d_1d_2}^{\downarrow} = \{ \lambda\in\mathbb{R}^{d_1d_2} : \lambda_1 \ge \cdots \ge \lambda_{d_1d_2} \ge 0,\, \sum_{i=1}^{d_1d_2} \lambda_i = 1\}$ forms a $(d_1d_2-1)$-simplex in $\mathbb{R}^{d_1d_2}$, whose extreme points are given by $v_k=\frac{1}{k}(1^{\times k},0^{\times (d_1d_2-k)})$ for $1\leq k \leq d_1d_2$. Consequently, $\Delta_{d_1d_2}^{\downarrow} = \operatorname{conv}(v_1, \dots, v_{d_1d_2})$.\\

\textit{\textbf{APPT outer polytope.}} The necessary condition in Eq.~\eqref{eq:APPT_Necessary} defines a linear half-space that, intersected with the ordered simplex $\Delta_{d_1d_2}^{\downarrow}$, forms an outer polytope enclosing the ordered $\APPT_{d_1,d_2}$ set. Letting $h_{d_1d_2}(\lambda) = \lambda_{d_1d_2} + \lambda_{d_1d_2-1} + \lambda_{d_1d_2-2} - \lambda_1$, this outer polytope is defined as $Q_{d_1d_2}^{\downarrow} = \Delta_{d_1d_2}^\downarrow \cap \{ \lambda : h_{d_1d_2}(\lambda) \ge 0 \}$.\\

\noindent\textit{\textbf{Permutation-PPT.}} Although $\APPT$ is a purely spectral property, studying whether every $\APPT$ state diagonal in a fixed basis is separable provides valuable insight into the $\APPT = \mathrm{AS}$ conjecture. Indeed, the equivalence holds if and only if all $\APPT$ states diagonal in every choice of orthonormal basis are separable; conversely, identifying a single basis containing an entangled $\APPT$ state would immediately refute the conjecture. To this end, we fix a Bell basis and analyze the separability of $\APPT$ states diagonal in that basis. The Bell-basis permutation unitaries $U_\pi$, $\pi\in S_{d^2}$, defined by $U_\pi P_xU_\pi^\dagger=P_{\pi(x)}$, are among the global unitaries that map the set of $d\otimes d$ Bell-diagonal states to itself. Their action on the coefficient vector is given by
\begin{equation}
\label{eq:perm-action}
(\pi c)_{\pi(x)} = c_x, \qquad \rho(\pi c) = \sum_x (\pi c)_x P_x,
\end{equation}
which corresponds to mapping the coefficient originally at index $x$ to the new position $\pi(x)$. We say that $c$ is \(\PPPT\) if $\rho(\pi c)$ is PPT for all permutations $\pi \in S_{d^2}$, and we denote the set of all such states by $\PPPT$. Since APPT requires PPT preservation under every global unitary, it immediately follows that
\begin{equation}\label{eq:APPT-implies-PPPT}
\APPT \cap \BD \subset \PPPT.
\end{equation}
The reverse inclusion, however, is not assumed to hold in general.

\section{Separability of two-qutrit APPT Bell-diagonal states}
\label{sec:3x3results}
Consider a two-qutrit state $\rho$ with ordered spectrum $\lambda_1 \ge \lambda_2 \ge \cdots \ge \lambda_9 \ge 0$, satisfying $\sum_{i=1}^{9} \lambda_i = 1$. Specializing the LMIs in Eqs.~\eqref{eq:APPT-3xd}–\eqref{eq:APPT_2} to \(d=3\), we define
\[ M_i^{3\otimes3}(\lambda) := \left.M_i(\lambda)\right|_{d=3}, \qquad i=1,2. \]
The two-qutrit $\APPT$ criterion therefore becomes
\[
\rho\in\APPT_{3,3} \quad\Longleftrightarrow\quad M_1^{3\otimes3}(\lambda)\succeq0
\ \text{and}\ M_2^{3\otimes3}(\lambda)\succeq0.
\]
\subsection{Ordered simplex, ordered APPT set, and outer polytope}
As introduced earlier, the ordered simplex in the two-qutrit setting is defined by $\Delta_9^\downarrow = \{ \lambda \in \mathbb{R}^9 : \lambda_1 \ge \lambda_2 \ge \cdots \ge \lambda_9 \ge 0,\, \sum_{i=1}^9 \lambda_i = 1 \}$. Its nine extreme points are given by
\begin{equation}
\label{eq:vertices_ordered_simplex}
u_k = \frac{1}{k} (1^{\times k}, 0^{\times (9-k)}), \quad k \in {1, \dots, 9},
\end{equation}
so that $\Delta_9^\downarrow = \operatorname{conv}(u_1, \dots, u_9)$. Setting $\lambda_{10} := 0$, every ordered spectrum admits the unique barycentric decomposition
\begin{equation}
\label{eq:barycentric_3x3}
\lambda = \sum_{k=1}^9 \alpha_k u_k, \quad \alpha_k = k(\lambda_k - \lambda_{k+1}) \ge 0, \quad \sum_{k=1}^9 \alpha_k = 1.
\end{equation}
By Lemma~\ref{lemma:linearized_necessary_APPT}, any two-qutrit $\APPT$ spectrum must satisfy the linear inequality $\lambda_1 \le \lambda_7 + \lambda_8 + \lambda_9$. Defining the linear functional $h_9(\lambda) = \lambda_7 + \lambda_8 + \lambda_9 - \lambda_1$, we introduce the $3 \otimes 3$ $\APPT$ outer polytope 
\begin{equation}
    \label{eq:APPT_outer_polytope_3x3}
    Q_9^\downarrow = \Delta_9^\downarrow \cap \{ \lambda : h_9(\lambda) \ge 0 \}.
\end{equation} 
In what follows, we show that the set $Q_9^\downarrow$, together with all distinct permutations of its elements, corresponds to separable Bell-diagonal states.

To this end, we first determine the extreme points of the outer polytope $Q_9^\downarrow$. The following lemma provides a complete rank characterization of its vertices in barycentric coordinates.
\begin{lemma}
    \label{lemma:rank_vertices_outerpolytope}
    [Rank characterization of vertices]
    Let $\lambda = \sum_{k=1}^9 \alpha_k u_k \in Q_9^\downarrow$ with $\alpha_k \ge 0$ and $\sum_{k=1}^9 \alpha_k = 1$. Defining the barycentric rank of $\lambda$, denoted by $\operatorname{rank}(\lambda)$, as the number of non-zero coefficients in its barycentric decomposition, the following conditions hold:
    \begin{enumerate}
    \item If $h_9(\lambda) > 0$ and $\operatorname{rank}(\lambda) \ge 2$, then $\lambda$ is not a vertex of $Q_9^\downarrow$.
    \item If $h_9(\lambda) = 0$ and $\operatorname{rank}(\lambda) \ge 3$, then $\lambda$ is not a vertex of $Q_9^\downarrow$.
    \end{enumerate}
\end{lemma}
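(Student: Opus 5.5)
The plan is to work entirely in barycentric coordinates, where $Q_9^\downarrow$ becomes a very simple polytope. The map $\alpha\mapsto\lambda=\sum_k\alpha_k u_k$ is an affine bijection from the standard simplex $\{\alpha\ge0,\ \sum_k\alpha_k=1\}$ onto $\Delta_9^\downarrow$, by the unique decomposition in Eq.~\eqref{eq:barycentric_3x3}. Since $h_9$ is linear in $\lambda$, its pullback is linear in $\alpha$:
\[
h_9(\lambda)=\sum_{k=1}^9\alpha_k\,h_9(u_k)=:\sum_k\alpha_k\eta_k .
\]
The values $\eta_k=h_9(u_k)$ are computed directly. We get $\eta_k=-1/k$ for $k\le 6$, since then $\lambda_7=\lambda_8=\lambda_9=0$ and $\lambda_1=1/k$. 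For the remaining vertices, $\eta_7=0$, $\eta_8=1/8$ and $\eta_9=2/9$. So in $\alpha$-coordinates, $Q_9^\downarrow$ is the standard simplex cut by one half-space $\sum_k\eta_k\alpha_k\ge0$. Vertices of $Q_9^\downarrow$ correspond to vertices of this cut simplex, because affine bijections preserve extreme points.

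For item 1, take $\lambda$ with $h_9(\lambda)>0$ and at least two nonzero coefficients, say $\alpha_i,\alpha_j>0$ with $i\ne j$. Perturb along the direction $e_i-e_j$: set $\alpha^\pm=\alpha\pm\epsilon(e_i-e_j)$. Both $\alpha^\pm$ stay nonnegative and sum to one for small $\epsilon>0$. By continuity of the linear functional they keep $h_9>0$, so both lie in $Q_9^\downarrow$. Then $\lambda$ is the midpoint of two distinct points of $Q_9^\downarrow$, and is not a vertex.

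For item 2, take $h_9(\lambda)=0$ and at least three indices $i,j,k$ with positive $\alpha$. The directions must now preserve both $\sum_k\alpha_k=1$ and $h_9=0$, and be supported on $\{i,j,k\}$. These are vectors $\delta$ supported on the three indices satisfying the two linear equations $\sum\delta=0$ and $\sum\eta\delta=0$. Three unknowns and two equations always leave a nonzero solution. Then $\alpha\pm\epsilon\delta$ remains in the cut simplex for small $\epsilon$, since positivity holds on the support and zero coordinates stay zero. Again $\lambda$ is a midpoint and not a vertex.

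The only mildly delicate point is being careful that the perturbed points genuinely lie in $Q_9^\downarrow$. This means checking that the coordinates outside the support are untouched, and that distinctness follows from $\delta\ne0$ together with injectivity of the barycentric map. I expect no real obstacle; the argument is standard polytope face counting, and the explicit values $\eta_k$ are only needed afterwards, to enumerate the actual vertices.
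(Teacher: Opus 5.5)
Your proposal is correct and follows essentially the same route as the paper. In barycentric coordinates you perturb along $e_i-e_j$ when $h_9(\lambda)>0$; when $h_9(\lambda)=0$ you perturb along a nonzero vector supported on three positive indices that preserves both normalization and $h_9=0$, so in each case $\lambda$ is the midpoint of two distinct feasible points. Your added justification that the two points are distinct (injectivity of the barycentric map, i.e.\ linear independence of the $u_k$) is a small detail the paper leaves implicit.
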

For the proof of the lemma, see Appendix~\ref{app:lemma1}. Thus every new vertex produced by the cut must lie on an original edge of $\Delta_9^\downarrow$.
\subsubsection{The fifteen ordered vertices}
Using the vertex structure as given in Eq.~\eqref{eq:vertices_ordered_simplex}, a direct evaluation yields
\begin{equation*}
h_9(u_k) = -\frac{1}{k} < 0 \quad \text{for } k \in {1, \dots, 6},~\text{and}
\end{equation*} 
\begin{equation*}
h_9(u_7) = 0, \qquad h_9(u_8) = \frac{1}{8} > 0, \qquad h_9(u_9) = \frac{2}{9} > 0.
\end{equation*}
Thus, the original simplex vertices retained in the outer polytope are $u_7$, $u_8$, and $u_9$. Each of the six infeasible vertices $u_i$ ($1 \le i \le 6$) forms a crossing edge with both $u_8$ and $u_9$, generating twelve new extreme points on the supporting hyperplane $h_9(\lambda) = 0$. Specifically, for $i \in \{1, \dots, 6\}$, the intersection of the line segment $(u_i, u_8)$ with the plane $h_9(\lambda) = 0$ is
\begin{equation}
\label{eq:wi8}
w_i^{(8)} = \left(\left(\frac{2}{i+8}\right)^{\times i}, \left(\frac{1}{i+8}\right)^{\times (8-i)}, 0\right),
\end{equation}
while the intersection of $(u_i, u_9)$ with $h_9(\lambda) = 0$ is
\begin{equation}
\label{eq:wi9}w_i^{(9)} = \left(\left(\frac{3}{2i+9}\right)^{\times i}, \left(\frac{1}{2i+9}\right)^{\times (9-i)}\right).
\end{equation}
Consequently, the outer polytope is given by the convex hull of these fifteen extreme points:
\begin{equation}
\label{eq:Q15}
Q_9^\downarrow = \operatorname{conv}\bigl({u_7, u_8, u_9}, {w_i^{(8)}, w_i^{(9)} : i = 1, \dots, 6}\bigr).
\end{equation}
The explicit coordinate forms of all fifteen ordered vertices are tabulated in Appendix~\ref{app:ordered_vertices}. A corresponding characterization of the $\APPT$ outer polytope for arbitrary dimensions was also recently reported in Ref.~\cite{Tran2026}.
\subsection{From ordered simplex to full probability simplex}
Thus far, our characterization of the $\APPT$ set and its outer polyhedral relaxation has been restricted to the ordered simplex, where each point represents a descending eigenspectrum. We now extend this construction to the full probability simplex, $\Delta_9 = \{ c \in \mathbb{R}^9 : c_i \ge 0,\, \sum_{i=1}^9 c_i = 1 \}$, where $c$ represents the eigenspectrum (or, the coefficient vector) associated with the spectral decomposition of an arbitrary two-qutrit state. Toward this end, let the symmetric group $S_9$ act on $\mathbb{R}^9$ by coordinate permutations. Denoting the fifteen vertices of $Q_9^\downarrow$ by $\{V_r\}_{r=1}^{15}$, we define the set
\begin{equation}
\mathcal{V}_{2725} = \bigcup_{r=1}^{15} \{ P_\pi V_r : \pi \in S_9 \},
\label{eq:V2725}
\end{equation}
where duplicate vectors are eliminated, and we define its convex hull as
\begin{equation}
\mathcal{C}_{2725} = \operatorname{conv}(\mathcal{V}_{2725}).
\label{eq:C2725}
\end{equation}
Here, $P_\pi$ represents the permutation matrix corresponding to $\pi \in S_9$, acting on $c \in \Delta_9$ via $(\pi c)_{\pi(x)} = c_x$, which places the entry originally at coordinate $x$ into position $\pi(x)$. If the distinct entries in a vertex $V_r$ have multiplicities $m_1, m_2, \dots, m_k$ (such that $\sum_j m_j = 9$), the size of its permutation orbit is given by the multinomial coefficient
\begin{equation}
N_r = \frac{9!}{m_1! , m_2! \cdots m_k!}.
\end{equation}
For instance, the vertex $w_6^{(9)} = \bigl((1/7)^{\times 6}, (1/21)^{\times 3}\bigr)$ has multiplicities $m_1 = 6$ and $m_2 = 3$, yielding $9! / (6! \, 3!) = 84$ distinct permutations. Evaluating these orbit sizes across all fifteen ordered vertices yields
\begin{eqnarray}
\{N_r\}_{r=1}^{15} &=&\{1, 9, 36, 84, 252, 126, 504, 126, 630, \nonumber\\ &&84, 504, 36, 252, 9, 72\},
\end{eqnarray}
whose sum is precisely $\sum_{r=1}^{15} N_r = 2725$.

\begin{proposition}
\label{prop:APPT_inside_outerpolytope}
Every two-qutrit $\APPT$ spectrum belongs to $\mathcal C_{2725}$.
\end{proposition}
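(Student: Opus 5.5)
The plan is to prove the containment in two steps: first pass from an arbitrary $\APPT$ spectrum to its ordered rearrangement, which lies in $Q_9^\downarrow$; then show that every permutation of a point of $Q_9^\downarrow$ lies in $\mathcal C_{2725}$. Let $c\in\Delta_9$ be the spectrum of a two-qutrit $\APPT$ state, listed in any order, and let $\lambda=c^\downarrow$ be its nonincreasing rearrangement. Then $c=P_\pi\lambda$ for some $\pi\in S_9$.

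Since $\APPT$ is a spectral property, $\lambda$ is an $\APPT$ spectrum as well. Lemma~\ref{lemma:linearized_necessary_APPT} with $d_1=d_2=3$ gives $\lambda_1\le\lambda_7+\lambda_8+\lambda_9$, that is, $h_9(\lambda)\ge0$. Because $\lambda\in\Delta_9^\downarrow$ by construction, it follows that $\lambda\in Q_9^\downarrow$.

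Next invoke the vertex description in Eq.~\eqref{eq:Q15}. It rests on Lemma~\ref{lemma:rank_vertices_outerpolytope}, which shows that every vertex of the cut polytope is either a feasible original vertex or lies on an edge of $\Delta_9^\downarrow$ joining an infeasible vertex to a feasible one. By the evaluations of $h_9(u_k)$, the feasible vertices are $u_7,u_8,u_9$, and the crossing edges are $(u_i,u_8)$ and $(u_i,u_9)$ for $i\le6$. The edges $(u_i,u_7)$ meet the hyperplane only at $u_7$ itself. Edges between two infeasible vertices, or between two feasible ones, contribute no new vertices. Since $Q_9^\downarrow$ is a bounded polytope, Minkowski's theorem gives
\[
\lambda=\sum_{r=1}^{15}\beta_r V_r,\qquad \beta_r\ge0,\qquad \sum_{r}\beta_r=1 .
\]

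Applying the linear map $P_\pi$ yields $c=\sum_r\beta_r P_\pi V_r$. Each $P_\pi V_r$ belongs to $\mathcal V_{2725}$ by the definition in Eq.~\eqref{eq:V2725}; removing duplicates does not affect membership. Hence $c\in\operatorname{conv}(\mathcal V_{2725})=\mathcal C_{2725}$. The argument in fact shows more: all of $\bigcup_\pi P_\pi Q_9^\downarrow$, which is a strictly larger set than the $\APPT$ spectra, is contained in $\mathcal C_{2725}$.

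The only substantive step is making sure that the fifteen listed points exhaust the vertices of $Q_9^\downarrow$. This reduces to Lemma~\ref{lemma:rank_vertices_outerpolytope} together with a check that each crossing edge meets $h_9=0$ at the stated point. Take the edge $(u_i,u_8)$ with $i\le6$. On it $h_9$ is affine, with $h_9(u_i)=-1/i$ and $h_9(u_8)=1/8$. It therefore has a unique zero, and solving gives Eq.~\eqref{eq:wi8}; the edge $(u_i,u_9)$ gives Eq.~\eqref{eq:wi9} in the same way. Everything else is linearity of permutation matrices and spectral invariance of $\APPT$.
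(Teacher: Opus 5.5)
Your proof is correct and follows the paper's argument essentially step for step. You pass to the ordered rearrangement $\lambda=c^\downarrow\in Q_9^\downarrow$, write $\lambda$ as a convex combination of the fifteen vertices $V_r$, and push that decomposition through the permutation matrix $P_\pi$; your extra check that the fifteen points exhaust the vertices, via Lemma~\ref{lemma:rank_vertices_outerpolytope} and the edge intersections, just makes explicit what the paper takes from Eq.~\eqref{eq:Q15}.
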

\begin{proof}
Let $c \in \Delta_9$ be the eigenspectrum of a two-qutrit $\APPT$ state, and let $\lambda = c^\downarrow$ denote its decreasing rearrangement. Because $\APPT$ is a spectral property, it follows directly that $\lambda \in Q_9^\downarrow$. Relabeling the fifteen vertices of $Q_9^\downarrow$ from Eq.~\eqref{eq:Q15} as $\{V_r\}_{r=1}^{15}$, the ordered spectrum admits the convex decomposition
\begin{equation*}
\lambda = \sum_{r=1}^{15} t_r V_r, \quad \text{with } t_r \ge 0 \quad \text{and} \quad \sum_{r=1}^{15} t_r = 1.
\end{equation*}
There exists a permutation $\pi \in S_9$ such that $c = P_\pi \lambda$. By linearity of the permutation matrix $P_\pi$,
\begin{equation*}
c = \sum_{r=1}^{15} t_r P_\pi V_r.
\end{equation*}
Since each $P_\pi V_r \in \mathcal{V}_{2725}$ by the definition in Eq.~\eqref{eq:V2725}, it follows immediately that $c \in \operatorname{conv}(\mathcal{V}_{2725}) = \mathcal{C}_{2725}$.
\end{proof}
\subsection{Characterization of the Bell-diagonal APPT outer polytope}
Let the nine two-qutrit Bell projectors $P_x$, defined in Eq.~\eqref{eq:qutrit-Bell}, be labeled by phase-space points $x = (k, l) \in \mathbb{F}_3^2$. The two-qutrit Bell-diagonal state associated with an arbitrary probability vector $c \in \Delta_9$ is then given by
\begin{equation}
\label{eq:two-qutrit_BDS}
\rho(c) = \sum_{x \in \mathbb{F}_3^2} c_x P_x.
\end{equation}
In what follows, we characterize the outer polytope enclosing the set of all $\APPT$ Bell-diagonal states by decomposing the distinct permutations of its ordered vertices into orbits under the affine symmetry.

\subsubsection*{Orbit decomposition of the fifteen ordered spectral types}
In order to determine the orbit decomposition of the fifteen ordered spectral types under the action of $\mathrm{AGL}(2,3)$, we analyze the incidence and collinearity properties of each spectral profile when mapped onto the affine plane $\mathrm{AG}(2,3)$. The action of the affine symmetry group on the two-qutrit Weyl--Heisenberg Bell simplex is analyzed in detail in Appendix~\ref{app:3x3_symmetry}. For compactness, we denote the distinct entries within a given spectral type by $H$, $M$, $L$, and $Z$, standing respectively for high, intermediate, low, and zero coefficients (when present). A configuration is uniquely specified by the phase-space coordinates supporting the non-generic values; the remaining positions are then occupied automatically. For instance, for the spectrum $((2/13)^{\times 5}, (1/13)^{\times 3}, 0)$, fixing the locations of the three $1/13$ entries and the single $0$ entry completely determines the five positions carrying $2/13$. In Appendix~\ref{app:rep_count_3x3}, we detail the representative counting arguments for all non-trivial spectral types where orbit splitting is not immediate. Complementing this analysis, Appendix~\ref{app:numerical_algorithm} provides an algorithm to systematically extract both the orbit sizes and canonical representatives numerically. The numbers of affine orbits generated by each of the fifteen spectral types are
\begin{equation*}
(1, 1, 1, 2, 2, 2, 3, 2, 4, 2, 3, 1, 2, 1, 1),
\end{equation*}
yielding exactly $28$ distinct orbits in total. The corresponding orbit sizes, grouped according to the fifteen ordered spectral types, are given by
\begin{equation*}
\begin{split}
&\{1\},\{9\},\{36\},\{72,12\},\{216,36\},\{72,54\},
\{216,216,72\},\\
&\{72,54\},\{216,216,144,54\},\{72,12\},
\{216,216,72\},\{36\},\\&\{216,36\},\{9\},\{72\},
\end{split}
\end{equation*}
whose sum recovers the total count of $2725$. Table~\ref{tab:orbits} summarizes the detailed orbit decomposition of these $2725$ coefficient vectors. Specifically, the third column of the table reports the number of all distinct $S_9$ permutations of the specific ordered spectral type, while the last column gives the sizes of the inequivalent $\AGL(2,3)$ orbits within that spectral type. In each row, the affine orbit sizes sum to the corresponding $S_9$ multiplicity for that type, and summing over all rows recovers $\vert{}\mathcal{V}_{2725}\vert{} = 2725$. Appendix~\ref{app:orbit_representative_3x3} lists all $28$ orbit representatives. We emphasize that these choices are not unique: any element $c' \in \operatorname{Orb}(c)$ serves as an equivalent representative for the orbit.

\begin{table*}[t]
\caption{Decomposition of the 2725 coefficient vectors into distinct affine orbits. The third column gives the number of all distinct $S_9$ multiplicities of the coefficient multiset. The last column gives the sizes of the inequivalent $\AGL(2,3)$ orbits within that spectral type. Here $H,M,L,Z$ denote, within a given spectral type, the distinct coefficient values, standing respectively for higher, intermediate, lower, and zero coefficients (if exist); these labels are row-dependent. A ``marked set'' refers to the subset of phase-space points whose coefficient values are specified explicitly, the remaining positions then being fixed automatically. A ``four-cap'' is a four-point subset of $\text{AG}(2,3)$ containing no three collinear points.}
\label{tab:orbits}
\centering
\small
\renewcommand{\arraystretch}{1.35}
\setlength{\tabcolsep}{10pt}
\begin{tabular}{c l r p{6.5cm} l}
\toprule
Type & Ordered coefficient multiset & $S_9$ count & Affine-geometric distinction & Orbit sizes \\
\midrule
1  & $(1/9)^{\times9}$ & 1 & uniform & $1$ \\
2  & $(1/8)^{\times8},0$ & 9 & one distinguished point & $9$ \\
3  & $(1/7)^{\times7},0^{\times2}$ & 36 & unordered pair of zero positions; their line has pattern $HZZ$ & $36$ \\
4  & $(1/7)^{\times6},(1/21)^{\times3}$ & 84 & three $L$ points collinear / noncollinear & $12,72$ \\
5  & $(1/7)^{\times6},(1/14)^{\times2},0$ & 252 & $ZMM$ collinear / noncollinear & $36,216$ \\
6  & $(3/19)^{\times5},(1/19)^{\times4}$ & 126 & four $L$ points contain a line / form a four-cap & $72,54$ \\
7  & $(2/13)^{\times5},(1/13)^{\times3},0$ & 504 & $L^3$ collinear; or $L^3$ noncollinear with $Z$ on a secant; or four-cap & $72,216,216$ \\
8  & $(3/17)^{\times4},(1/17)^{\times5}$ & 126 & equivalently classify the four $H$ points: line-containing / four-cap & $72,54$ \\
9  & $(1/6)^{\times4},(1/12)^{\times4},0$ & 630 & for marked set $H^4Z$: $HHH$ only; $ZHH$ only; one of each; two $ZHH$ lines & $144,216,216,54$ \\
10 & $(1/5)^{\times3},(1/15)^{\times6}$ & 84 & three $H$ points collinear / noncollinear & $12,72$ \\
11 & $(2/11)^{\times3},(1/11)^{\times5},0$ & 504 & analogous to type 7 after exchanging the distinguished coefficient class & $72,216,216$ \\
12 & $(3/13)^{\times2},(1/13)^{\times7}$ & 36 & unordered pair of $H$ positions & $36$ \\
13 & $(1/5)^{\times2},(1/10)^{\times6},0$ & 252 & $ZHH$ collinear / noncollinear & $36,216$ \\
14 & $3/11,(1/11)^{\times8}$ & 9 & one distinguished point & $9$ \\
15 & $2/9,(1/9)^{\times7},0$ & 72 & ordered pair of distinct distinguished points $(H,Z)$ & $72$ \\
\bottomrule
\end{tabular}
\end{table*}
\subsection{Separability of the Bell-diagonal APPT outer polytope}
We now proceed to prove that every Bell-diagonal state $\rho(c)$, as defined in Eq.~\eqref{eq:two-qutrit_BDS}, associated with a coefficient vector $c \in \mathcal{C}_{2725}$ is separable. Because $\mathcal{C}_{2725}$ is the convex hull of the set of coefficient vectors $\mathcal{V}_{2725}$, establishing separability reduces by convexity to showing that $\rho(c)$ is separable for all $c \in \mathcal{V}_{2725}$. Since the affine group $\mathrm{AGL}(2,3)$ preserves the entanglement structures and partitions $\mathcal{V}_{2725}$ into 28 disjoint affine orbits, it suffices to certify the separability of a single target representative from each orbit. To this end, we construct an explicit inner separable polytope within the two-qutrit Bell-diagonal magic simplex.

\subsubsection{An explicitly separable Bell-diagonal generating family}
We construct a finite family of certified separable Bell-diagonal states that serves as a generating set to certify the separability of $2725$ target Bell-diagonal states.

\paragraph{Affine lines in $\mathbb{F}_3^2$.} Identifying the nine Bell-projector indices with the discrete phase space $\mathbb{F}_3^2$, there exist twelve affine lines. For each line $\ell \subset \mathbb{F}_3^2$, define the uniform mixture
\begin{equation}
\mathfrak{L}_\ell = \frac{1}{3} \sum_{x \in \ell} P_x.
\label{eq:line}
\end{equation}
By Lemma~\ref{lemma:Coset_separability}, each line state $\mathfrak{L}_\ell$ is separable~\cite{Baumgartner2006,Baumgartner2007}.

\paragraph{A non-trivial separable state and its affine orbit.} Consider the Bell coefficient vector
\begin{equation}
\mathfrak{R}_0 = \left(\frac{2}{9}, 0, \frac{2}{9}, \frac{4}{27}, \frac{4}{27}, \frac{4}{27}, \frac{1}{27}, \frac{1}{27}, \frac{1}{27}\right),
\label{eq:R0}
\end{equation}
ordered according to the phase-space coordinates $(k, l) \in \mathbb{F}_3^2$ as $(00, 01, 02, 10, 11, 12, 20, 21, 22)$. This state is separable. An explicit product-state realization is given by $\ket{a, b} = \ket{a} \otimes \ket{b}$ with
\[
\ket{a}=
\begin{pmatrix}
\sqrt{3+\sqrt6}/3\\
i\sqrt{3-\sqrt6}/3\\
\sqrt{(3-\sqrt6)/18}-i\sqrt{(3+\sqrt6)/18}
\end{pmatrix},
\]
\[
W=
\begin{pmatrix}
0&0&-(\sqrt3+i)/2\\
0&i&0\\
(\sqrt3-i)/2&0&0
\end{pmatrix},
\;\;\;
\ket{b}=W\overline{\ket{a}}.
\]
A straightforward calculation confirms that $\vert{}\langle\Omega_{k,l}\vert{}a,b\rangle\vert{}^2 = (\mathfrak{R}_0)_{k,l}$. Therefore, by Eq.~\eqref{eq:twirl-pinching}, the Bell-twirl $\mathcal{T}(\ket{a,b}\bra{a,b})$ yields the Bell-diagonal state $\rho(\mathfrak{R}_0)$. Because the Bell-twirling map is local operations and classical communication (LOCC), it preserves separability, guaranteeing that $\rho(\mathfrak{R}_0)$ is separable. As established,  the affine group operation on the coefficient vector preserves separability of associated Bell-diagonal states. Because the stabilizer $\operatorname{Stab}(\mathfrak{R}_0)$ contains exactly six affine transformations, the orbit-stabilizer relation as given in Eq.~\eqref{eq:orbitstabilizer} together with $\vert{}\mathrm{AGL}(2,3)\vert{} = 432$ implies that the affine orbit contains $\vert{}\operatorname{Orb}(\mathfrak{R}_0)\vert{} = 432 / 6 = 72$ distinct separable Bell-diagonal states. Combining these states yields a finite separable generating set of $84$ extreme states:
\begin{equation}
\mathcal{G}{\mathrm{sep}} = \{\mathfrak{L}_\ell\}_{\ell=1}^{12} \cup \operatorname{Orb}_{\AGL(2,3)}(\mathfrak{R}_0),
\label{eq:Gsep}
\end{equation}
every element of which is manifestly separable.

\subsubsection{Exact rational decomposition certificate}
Let the $84$ explicitly separable generators of $\mathcal G_{\rm sep}$ be denoted by $G_1,\ldots,G_{84}$. For each of the $28$ affine-orbit representatives $T_r$, $r=1,\ldots,28$, listed in Appendix~\ref{app:orbit_representative_3x3}, we provide an exact separability certificate. Specifically, for each $T_r$ there exists a finite support $S_r\subseteq\{1,\ldots,84\}$ and positive rational coefficients $p_{rj}$ such that
\begin{equation}
T_r=\sum_{j\in S_r}p_{rj}G_j,
\qquad
\sum_{j\in S_r}p_{rj}=1.
\label{eq:cert}
\end{equation}
The complete ordered list of the generators \(G_1,\ldots,G_{84}\), together with the exact \textit{Wolfram Mathematica} implementation used to obtain and verify the decompositions in Table~\ref{tab:84cert1}, is provided in the Supplementary Material~\cite{SupplementalMaterial}. Since every $G_j\in\mathcal G_{\rm sep}$ is separable, Eq.~\eqref{eq:cert} gives an explicit convex decomposition of each $T_r$ into separable states and therefore proves the separability of all $28$ representatives. The supports $S_r$ and the corresponding exact rational coefficients are reported in Table~\ref{tab:84cert1}. The $28$ representatives themselves are listed in Appendix~\ref{app:orbit_representative_3x3} in the flattening order $(00,01,02,10,11,12,20,21,22)$. Each decomposition in Eq.~\eqref{eq:cert} is an exact identity over the rationals.

\begin{table*}[t]
\caption{Exact rational separable-decomposition certificates for the $28$ affine-orbit representatives. For each $r$, the support $S_r\subseteq\{1,\ldots,84\}$ and the listed weights $p_{rj}$ satisfy $T_r=\sum_{j\in S_r}p_{rj}G_j$. The representatives $\{T_r\}_{r=1}^{28}$ are listed in Appendix~\ref{app:orbit_representative_3x3}, and the ordered generators $G_1,\ldots,G_{84}$ are provided in the Supplementary Material~\cite{SupplementalMaterial}.}
\label{tab:84cert1}
\centering
\small
\setlength{\tabcolsep}{5pt}
\renewcommand{\arraystretch}{1.4}
\begin{tabular}{c c c p{4.0cm} p{10.0cm}}
\toprule
$r$ & type & orbit & support $S_r$ & exact weights $p_{rj}$ \\
\midrule
1 & 1 & 1 & 1,2,3 & $1/3, 1/3, 1/3$ \\
2 & 2 & 9 & 2,3,5,6,8,9,11,12 & $1/8$ (8 times) \\
3 & 3 & 36 & 2,3,6,9,11 & $2/7, 2/7, 1/7, 1/7, 1/7$ \\
4 & 4 & 12 & 1,2,3 & $1/7, 3/7, 3/7$ \\
5 & 4 & 72 & 1,2,3,5,6,7,9 & $1/21, 1/7, 5/21, 2/21, 4/21, 2/21, 4/21$ \\
6 & 5 & 36 & 2,3,5,6,8,9,11,12 & $2/7, 2/7, 1/14, 1/14, 1/14, 1/14, 1/14, 1/14$ \\
7 & 5 & 216 & 2,3,5,6,9,11,12 & $1/7, 3/14, 1/7, 3/14, 1/7, 1/14, 1/14$ \\
8 & 6 & 72 & 2,3,7,9,10,12,20,72 & $17/133, 31/133, 4/133, 4/133, 12/133, 20/133, 36/133, 9/133$ \\
9 & 6 & 54 & 1,2,3,6,7,9,11,12 & $1/19, 1/19, 5/19, 4/19, 2/19, 2/19, 2/19, 2/19$ \\
10 & 7 & 216 & 2,3,5,6,9,12 & $3/13, 4/13, 2/13, 2/13, 1/13, 1/13$ \\
11 & 7 & 216 & 2,3,5,6,9,11,12 & $1/13, 3/13, 1/13, 3/13, 1/13, 2/13, 2/13$ \\
12 & 7 & 72 & 2,3,5,6,9,11,12,65 & $4/39, 4/39, 4/39, 4/39, 2/13, 4/39, 4/39, 3/13$ \\
13 & 8 & 72 & 1,2,3,6,7,12 & $1/17, 3/17, 7/17, 2/17, 2/17, 2/17$ \\
14 & 8 & 54 & 2,3,5,6,7,10,20,48 & $2/17, 2/17, 1/17, 1/17, 1/17, 1/17, 9/34, 9/34$ \\
15 & 9 & 216 & 2,3,5,6,11,12 & $1/6, 1/3, 1/12, 1/6, 1/12, 1/6$ \\
16 & 9 & 54 & 2,3,5,6 & $1/4$ (4 times) \\
17 & 9 & 216 & 2,3,5,6,8,9,12,30 & $7/36, 7/36, 1/36, 1/9, 1/36, 1/9, 1/12, 1/4$ \\
18 & 9 & 144 & 3,5,6,8,9,11,12 & $1/4, 1/12, 1/6, 1/12, 1/6, 1/6, 1/12$ \\
19 & 10 & 12 & 1,2,3 & $1/5, 1/5, 3/5$ \\
20 & 10 & 72 & 2,3,6,7,8,20,38,48,84 & $2/95, 44/285, 2/15, 47/285, 1/19, 9/95, 3/19, 3/19, 6/95$ \\
21 & 11 & 72 & 2,3,5,6,8,9,11,12 & $1/11, 4/11, 1/11, 1/11, 1/11, 1/11, 1/11, 1/11$ \\
22 & 11 & 216 & 2,3,5,6,11,12,38 & $4/33, 7/33, 4/33, 7/33, 1/33, 1/33, 3/11$ \\
23 & 11 & 216 & 2,3,6,8,9,12 & $2/11, 3/11, 2/11, 1/11, 1/11, 2/11$ \\
24 & 12 & 36 & 3,5,6,7,8,10,11,75,84 & $3/13, 2/39, 2/39, 2/39, 2/39, 2/39, 2/39, 3/13, 3/13$ \\
25 & 13 & 216 & 2,3,5,6,8,11 & $1/10, 3/10, 1/5, 1/5, 1/10, 1/10$ \\
26 & 13 & 36 & 2,3,5,6,11,12,38,65 & $1/15, 1/15, 1/15, 1/15, 1/15, 1/15, 3/10, 3/10$ \\
27 & 14 & 9 & 1,2,3,6,8,10 & $1/11, 1/11, 3/11, 2/11, 2/11, 2/11$ \\
28 & 15 & 72 & 2,3,5,6,8,9 & $1/9, 2/9, 1/9, 2/9, 2/9, 1/9$ \\
\bottomrule
\end{tabular}
\end{table*}

\begin{proposition}
Every vector in $\mathcal V_{2725}$ corresponds to a separable Bell-diagonal state.
\label{prop:2725sep}
\end{proposition}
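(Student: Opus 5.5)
The plan is to reduce the claim to the $28$ affine-orbit representatives and then read off separability from the exact certificates in Table~\ref{tab:84cert1}. The argument has three steps.

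\emph{Step 1: symmetry invariance.} I would first show that for every $g\in\AGL(2,3)$, the map $c\mapsto g c$ (with $(gc)_{g(x)}=c_x$) is implemented by a local unitary, possibly composed with a swap or complex conjugation. Concretely:
\begin{itemize}
\item translations $x\mapsto x+t$ are realized by $I\otimes U_t$, since $U_tU_x\propto U_{x+t}$ and global phases drop out of $P_x$;
\item symplectic maps in $\Sp(2,3)=\mathrm{SL}(2,3)$ are realized by Clifford unitaries $C$ acting as $\overline{C}\otimes C$, which fixes $\ket{\Omega_3}$ and sends $U_x\mapsto CU_xC^\dagger\propto U_{Sx}$;
\item the remaining determinant $-1$ elements of $\mathrm{GL}(2,3)$ are realized by complex conjugation or transposition (antiunitary), which preserve separability.
\end{itemize}
Hence $\rho(gc)$ is separable whenever $\rho(c)$ is. I would cite Appendix~\ref{app:3x3_symmetry} for the explicit realization.

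\emph{Step 2: orbit reduction.} By the classification summarized in Table~\ref{tab:orbits}, $\mathcal V_{2725}$ is the disjoint union of $28$ $\AGL(2,3)$-orbits, each containing a representative $T_r$ (Appendix~\ref{app:orbit_representative_3x3}). I would check that the orbit sizes sum to $2725$, so that no vector is missed. Combined with Step 1, it then suffices to prove that each $\rho(T_r)$ is separable.

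\emph{Step 3: certificates.} I would first establish that every generator $G_j$ is separable:
\begin{itemize}
\item the twelve line states are separable by Lemma~\ref{lemma:Coset_separability};
\item $\rho(\mathfrak R_0)$ is separable because it is the Bell twirl $\mathcal T(\ketbra{a,b}{a,b})$ of a product state, which is verified by computing $|\braket{\Omega_{k,l}}{a,b}|^2$, and the twirl preserves separability;
\item the $72$ states in $\operatorname{Orb}(\mathfrak R_0)$ are separable by Step 1.
\end{itemize}
Then, for each $r$, I would verify the rational identity $T_r=\sum_{j\in S_r}p_{rj}G_j$ with $p_{rj}>0$ summing to $1$, as listed in Table~\ref{tab:84cert1}. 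By linearity, $\rho(T_r)=\sum_j p_{rj}\rho(G_j)$ is a convex combination of separable states, hence separable.

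The main obstacle is not the final verification, which is finite exact arithmetic, but the completeness of the orbit decomposition in Step 2. Each of the fifteen spectral types must be correctly split into $\AGL(2,3)$-orbits, with sizes matching the $S_9$ counts. I would confirm this computationally by applying all $432$ group elements to all $2725$ vectors and counting the orbits, as in the algorithm of Appendix~\ref{app:numerical_algorithm}. For the delicate types (for instance types $7$, $9$ and $11$), I would cross-check using the orbit–stabilizer relation and the incidence invariants in the table.

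A secondary obstacle is finding the certificates themselves. I would search for them by solving a linear program over the $84$ generators for each $T_r$, then rationalize the solution and verify it exactly.
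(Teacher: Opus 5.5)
Your proposal is correct and follows essentially the same route as the paper: separability is invariant under the $\AGL(2,3)$ action, $\mathcal V_{2725}$ reduces to the $28$ orbit representatives $T_r$, and each $T_r$ is certified by the exact rational decompositions of Table~\ref{tab:84cert1} over the $84$ separable generators. The only difference is cosmetic. You realize the determinant $-1$ coset by complex conjugation, which sends $P_{k,l}\mapsto P_{-k,l}$ and preserves separability, whereas Appendix~\ref{app:3x3_symmetry} uses the subsystem swap $F$ with $FP_{k,l}F^\dagger=P_{k,-l}$; both choices work.
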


\begin{proof}
By Eq.~\eqref{eq:cert}, each affine-orbit representative $T_r$ can be expressed as a convex combination of states in $\mathcal G_{\rm sep}$ and is therefore separable. Now let $T$ be any of the $2725$ target vectors. By construction of the affine orbit decomposition, there exist an index $r\in\{1,\ldots,28\}$ and an element $g\in\operatorname{AGL}(2,3)$ such that
\[
T=g\cdot T_r.
\]
The action of $\AGL(2,3)$ on the Bell coefficients preserves separability. Consequently, since $T_r$ is separable, so is $g\cdot T_r=T$. The $28$ affine orbits form a disjoint partition of $\mathcal V_{2725}$ and together contain all $2725$ target vectors. Hence every element of $\mathcal V_{2725}$ corresponds to a separable Bell-diagonal state.
\end{proof}

The preceding results provide the two ingredients required for the final conclusion. Every absolutely PPT Bell-diagonal coefficient vector belongs to $\mathcal C_{2725}$, while every element of $\mathcal V_{2725}$ corresponds to a separable Bell-diagonal state. This leads directly to the following theorem.

\begin{theorem}
Every absolutely PPT two-qutrit Bell-diagonal state is separable. Equivalently,
\begin{equation}
\APPT_{3,3}\cap\BD(3)\subseteq\SEP .
\end{equation}
\end{theorem}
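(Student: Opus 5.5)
The theorem follows by chaining Proposition~\ref{prop:APPT_inside_outerpolytope} with Proposition~\ref{prop:2725sep} through convexity. Let $\rho\in\APPT_{3,3}\cap\BD(3)$. Being Bell-diagonal, it has the form $\rho=\rho(c)=\sum_{x\in\F_3^2}c_xP_x$ for some $c\in\Delta_9$. Its spectrum is exactly the multiset of Bell coefficients $\{c_x\}$, because the $P_x$ are mutually orthogonal rank-one projectors summing to the identity.

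The first step places $c$ in the outer polytope. Since $\APPT$ is a spectral property, the decreasing rearrangement $c^\downarrow$ is an $\APPT$ spectrum. Lemma~\ref{lemma:linearized_necessary_APPT} then gives $c^\downarrow\in Q_9^\downarrow$. By Proposition~\ref{prop:APPT_inside_outerpolytope}, the vector $c$ itself, with its coordinates placed at the phase-space labels $x$, lies in $\mathcal C_{2725}=\conv(\mathcal V_{2725})$. One small point needs care: the permutation $P_\pi$ used in that proposition must act on the same labelling $(00,01,\ldots,22)$ of Bell projectors used to define $\rho(c)$. This holds because $\mathcal V_{2725}$ is closed under all of $S_9$.

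The second step writes $c=\sum_j t_j T^{(j)}$ with $t_j\ge0$, $\sum_j t_j=1$ and each $T^{(j)}\in\mathcal V_{2725}$. The map $c\mapsto\rho(c)$ is linear, so
\begin{equation*}
\rho(c)=\sum_j t_j\,\rho\bigl(T^{(j)}\bigr).
\end{equation*}
Proposition~\ref{prop:2725sep} certifies that each $\rho(T^{(j)})$ is separable. That proposition rests on the exact rational decompositions of Table~\ref{tab:84cert1} into the generators $\mathcal G_{\rm sep}$, which consist of the line states from Lemma~\ref{lemma:Coset_separability} and the affine orbit of $\mathfrak R_0$, together with the fact that the $\AGL(2,3)$ action preserves separability. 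Since $\SEP$ is convex, $\rho(c)$ is separable. This proves $\APPT_{3,3}\cap\BD(3)\subseteq\SEP$, and in fact shows that every Bell-diagonal state whose spectrum lies in $Q_9^\downarrow$ is separable.

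At the level of this theorem nothing substantive remains; the real weight lies in the ingredients. The most delicate of these is that $\AGL(2,3)$ acting on coefficients is implemented by local unitaries, possibly composed with complex conjugation, mapping $P_x\mapsto P_{g(x)}$. Symplectic elements are realized by Clifford operations of the form $\overline{C}\otimes C$, translations by local Weyl operators, and anti-symplectic elements by partial conjugation or a swap. These operations preserve separability, and the reduction to $28$ orbit representatives depends on this. Granting it, the proof is the two-line convexity argument above.
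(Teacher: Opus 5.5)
Your proof is correct and follows the paper's argument: Proposition~\ref{prop:APPT_inside_outerpolytope} (which uses Lemma~\ref{lemma:linearized_necessary_APPT}) places $c$ in $\mathcal C_{2725}$, linearity of $c\mapsto\rho(c)$ turns the convex decomposition of $c$ into one of $\rho(c)$, and Proposition~\ref{prop:2725sep} together with convexity of $\SEP$ finishes. One minor point on your closing aside: for the determinant-$(-1)$ elements of $\AGL(2,3)$ the paper uses the swap $F$, with $FP_{k,l}F^\dagger=P_{k,-l}$, whereas ``partial conjugation'' is not a well-defined operation on states (global complex conjugation, which sends $P_{k,l}\mapsto P_{-k,l}$ and preserves separability, would also work).
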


\begin{proof}
Let $\rho(c)$ be a two-qutrit Bell-diagonal state that is absolutely PPT. By Proposition~\ref{prop:APPT_inside_outerpolytope}, its coefficient vector $c$ belongs to the permutation-symmetric outer polytope,
\[
c\in\mathcal C_{2725}
=\operatorname{conv}(\mathcal V_{2725}).
\]
Hence there exist coefficients \(q_\nu\ge0\), with \(\sum_{\nu=1}^{2725}q_\nu=1\), such that
\[
c=\sum_{\nu=1}^{2725} q_\nu \bar{T}_\nu, \qquad \bar{T}_\nu\in\mathcal V_{2725}.
\]
For each \(\bar T_\nu\), let
\[
\rho({\bar T_\nu}) = \sum_{x\in\mathbb F_3^2}(\bar T_\nu)_x P_x
\]
be the corresponding Bell-diagonal state. By Proposition~\ref{prop:2725sep}, every \(\rho({\bar T_\nu})\) is separable.
Therefore, since
\[
\rho_c = \sum_{\nu=1}^{2725}q_\nu\,\rho({\bar T_\nu})
\]
is a convex mixture of separable states, it is also separable.
\end{proof}

\section{Separability of two-ququart APPT Bell-diagonal states}
\label{sec:4x4results}
As discussed earlier, the necessary and sufficient conditions for $\APPT$ in $4 \otimes 4$ systems are governed by twelve LMIs~\cite{XiongSze2026}. Nevertheless, by Lemma~\ref{lemma:linearized_necessary_APPT}, every $4 \otimes 4$ $\APPT$ spectrum satisfies the linear constraint
\begin{equation}
\lambda_1 \le \lambda_{14} + \lambda_{15} + \lambda_{16},
\label{eq:APPT_necessary_4x4}
\end{equation}
where $\lambda = (\lambda_1, \dots, \lambda_{16})$ denotes the sorted eigenspectrum of a $4 \otimes 4$ state $\rho$ with $\lambda_1 \ge \cdots \ge \lambda_{16} \ge 0$.

The ordered simplex is defined as $\Delta_{16}^{\downarrow} = \{ \lambda\in\mathbb{R}^{16} : \lambda_1 \ge \cdots \ge \lambda_{16} \ge 0,\, \sum_{i=1}^{16} \lambda_i = 1\}$, with extreme points are given by $v_k=\frac{1}{k}(1^{\times k},0^{\times (16-k)})$ for $1\leq k \leq 16$. The outer polytope is defined as $Q_{16}^{\downarrow} = \Delta_{16}^\downarrow \cap \{ \lambda : h_{16}(\lambda) \ge 0 \}$, where $h_{16}(\lambda) = \lambda_{16} + \lambda_{15} + \lambda_{14} - \lambda_1$. Extending Lemma~\ref{lemma:rank_vertices_outerpolytope} to $Q_{16}^\downarrow$ directly yields an explicit characterization of its 29 ordered vertices.  Specifically, the simplex vertices $v_{14}$, $v_{15}$ and $v_{16}$ are shared by $\Delta_{16}^{\downarrow}$ and $Q_{16}^{\downarrow}$, whereas each $v_i$ with $i=1,\dots,13$ lies outside $Q_{16}^{\downarrow}$, and its edges to both $v_{15}$ and $v_{16}$ cross the supporting hyperplane $h_{16}(\lambda)=0$. This produces $26$ new extreme points on that hyperplane. Therefore, the outer polytope is given by the convex hull of these vertices as
\begin{equation}
Q_{16}^{\downarrow}=\conv\left(v_{14},v_{15},v_{16},x_i^{(15)},x_i^{(16)}:i=1,2,...,13\right),
\end{equation}
where 
\begin{eqnarray}
    x_i^{(15)}&=&\left(\left(\frac{2}{15+i}\right)^{\times i},\left(\frac{1}{15+i}\right)^{\times (15-i)},0\right)\\
    x_i^{(16)}&=&\left(\left(\frac{3}{16+2i}\right)^{\times i},\left(\frac{1}{16+2i}\right)^{\times (16-i)}\right).
\end{eqnarray}
Labeling the sixteen two-ququart Bell projectors $P_x$, defined in Eq.~\ref{eq:ququart_Bell}, by phase-space coordinates $x=(p_1,q_1,p_2,q_2)\in\F_2^4)$, the Bell-diagonal state corresponding to a probability vector $c\in\mathbb{R}^{16}:c_i\ge0,~\sum_ic_i=1$ is then given by 
\begin{equation}
    \label{eq:eq:two-ququart_BDS}
    \rho(c)=\sum_{x\in\F_2^4}c_x p_x.
\end{equation}
In sharp contrast to the two-qutrit scenario, the ordered vertices of $Q_{16}^\downarrow$ do not all correspond to separable Bell-diagonal states. For instance, consider the vertex $x_3^{(16)} = \bigl((3/22)^{\times 3}, (1/22)^{\times 13}\bigr)$. Assigning its three dominant weights to the Bell projectors $P_{0000}$, $P_{0001}$, and $P_{0010}$ produces the Bell-diagonal state \( \tilde{\rho} = \frac{1}{22} \left(I_{16} + 2 P_{0000} + 2 P_{0001} + 2 P_{0010}\right).\) The partial transpose $\tilde{\rho}^{\Gamma_A}$ has a negative minimum eigenvalue of $-1/44$, certifying that $\tilde{\rho}$ is NPT entangled. 

This counterexample demonstrates that the similar technique which was used to prove the separability of two-qutrit Bell-diagonal $\APPT_{3,3}$ set cannot be directly extended to $4 \otimes 4$ systems. The underlying reason behind this failure is that the necessary condition in Eq.~\eqref{eq:APPT_necessary_4x4} is very weak. Numerically, we have found that, with respect to the flat spectral measure, $\APPT$ spectra occupy approximately \(2.22\%\) of the \(4\otimes4\) $\APPT$ outer polytope, compared with \(36.87\%\) in the two-qutrit case~\cite{APPTSamplingNote}. Consequently, even if all $\APPT$ Bell-diagonal states in $4 \otimes 4$ are separable, certifying this property requires going beyond the technique involving linearized outer polytope approximation. 

In what follows, we demonstrate that if a two-ququart Bell-diagonal state as given in Eq.~\eqref{eq:eq:two-ququart_BDS} is entangled, then there exists a permutation of its coefficient vector $c$ that yields an NPT state. Hence, an entangled $4\otimes4$ lattice Bell-diagonal state cannot be $\APPT$. Equivalently, every lattice Bell-diagonal state that is PPT under all permutations of its Bell coefficients is separable. Towards this aim, we first compute the eigenspectrum of a partially transposed $4 \otimes 4$ lattice Bell-diagonal state, and then establish the condition on the eigenspectrum under which the state remains PPT under all permutations of its coefficient vector. We refer to such a state as a $4 \otimes 4$ Bell-diagonal \(\PPPT\) state.

\subsection{Condition for two-ququart Bell-diagonal \(\PPPT\) state}
Let the ququart systems \( A=A_1A_2\) and \(B=B_1B_2\), where each \(A_i\) and \(B_i\) is a qubit. Let \(\{\ket{\Phi_\mu}\}_{\mu\in\mathcal \F_2^2}\) denote the two-qubit Bell basis, and write the corresponding projector \( P_\mu:=\ket{\Phi_\mu}\!\bra{\Phi_\mu}. \) As defined in Eq.~\eqref{eq:ququart_Bell}, the two-ququart lattice Bell states can be expressed as
\begin{equation}
\ket{B_{\mu\nu}}_{AB} = S\left( \ket{\Phi_\mu}_{A_1B_1} \otimes \ket{\Phi_\nu}_{A_2B_2} \right), \;\; \mu,\nu\in\F_2^2,
\label{eq:lattice-Bell-product}
\end{equation}
where $\ket{\Phi_x}_{A_iB_i}=I_2^{A_i}\otimes Q_x^{B_i}\ket{\Phi^+}_{A_iB_i}$ with $x\in\F_2^2$, and \(S\) implements the canonical subsystem permutation from the interleaved ordering \(A_1B_1A_2B_2\) to the bipartite split \(A_1A_2B_1B_2\). The corresponding projectors are \( P_{\mu\nu}^{AB}=\ket{B_{\mu\nu}}_{AB}\bra{B_{\mu\nu}},\)
and an arbitrary lattice Bell-diagonal state takes the form \[\rho(c)^{AB}=\sum_{\mu,\nu\in\F_2^2}c_{\mu\nu}P_{\mu\nu}^{AB}.\] Let $c_{(1)} \ge c_{(2)} \ge \cdots \ge c_{(16)}$ denote the components of $c$ arranged in descending order. The following Proposition provides a complete characterization of the $4\otimes 4$ \(\PPPT\) lattice Bell-diagonal states.
\begin{proposition}
\label{prop:Perm_PPT}
 A two-ququart lattice Bell-diagonal state is \(\PPPT\) if and only if its Bell coefficients, arranged in decreasing order $c_{(1)}\geq\cdots\geq c_{(16)}$, satisfy $\sum_{i=1}^{6}c_{(i)}\leq1/2$.
\end{proposition}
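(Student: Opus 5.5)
The plan is to compute the spectrum of $\rho(c)^{\Gamma_A}$ in closed form using the tensor-product structure of Eq.~\eqref{eq:lattice-Bell-product}, and then optimize over permutations. Since $S$ only reorders the tensor factors, and partial transposition on $A=A_1A_2$ factorizes as $\Gamma_{A_1}\otimes\Gamma_{A_2}$, we have
\[
(P_{\mu\nu}^{AB})^{\Gamma_A}=S\bigl(P_\mu^{\Gamma_{A_1}}\otimes P_\nu^{\Gamma_{A_2}}\bigr)S^\dagger .
\]
For two-qubit Bell projectors one has the standard identity $P_\mu^{\Gamma}=\tfrac12 I_4-P_{\tau(\mu)}$. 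Here $\tau$ is a fixed bijection of $\F_2^2$; it is the identity up to the relabeling induced by $Y^T=-Y$. I will verify this identity directly from $\ket{\Phi^+}^{\Gamma}$ being half the swap operator, and by conjugating with $I\otimes Q_\mu$. Writing $\tfrac12 I_4-P_{\tau(\mu)}=\tfrac12\sum_\kappa s_{\mu\kappa}P_\kappa$ with $s_{\mu\kappa}=1-2\delta_{\kappa,\tau(\mu)}$ shows that $\rho(c)^{\Gamma_A}$ is again diagonal in the lattice Bell basis.

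Its eigenvalues are, after reindexing $\kappa\to\tau(\kappa)$, $\lambda\to\tau(\lambda)$,
\[
e_{\kappa\lambda}=\tfrac14\sum_{\mu,\nu}c_{\mu\nu}(1-2\delta_{\mu\kappa})(1-2\delta_{\nu\lambda})
=\tfrac14\bigl(1-2R_\kappa-2C_\lambda+4c_{\kappa\lambda}\bigr),
\]
where $R_\kappa=\sum_\nu c_{\kappa\nu}$ and $C_\lambda=\sum_\mu c_{\mu\lambda}$ are the row and column sums of the $4\times4$ array $(c_{\mu\nu})$. Thus $\rho(c)$ is PPT if and only if, for every cell $(\kappa,\lambda)$,
\[
R_\kappa+C_\lambda-2c_{\kappa\lambda}\le\tfrac12 .
\]
The left-hand side is exactly the sum of the six entries lying in row $\kappa$ or column $\lambda$, excluding the intersection cell. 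I would double-check this formula on the counterexample $\tilde\rho$ above, which should give the minimum eigenvalue $-1/44$.

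The permutation step is then elementary. For any permutation and any cell, the six ``cross'' entries are six distinct coefficients of $c$, so their sum is at most $\sum_{i=1}^6c_{(i)}$. Hence $\sum_{i=1}^6c_{(i)}\le1/2$ implies PPT for every $\pi c$. Conversely, if $\sum_{i=1}^6c_{(i)}>1/2$, choose $\pi$ placing the six largest coefficients in row $0$ and column $0$ outside cell $(0,0)$. This gives $e_{00}<0$, so $\rho(\pi c)$ is NPT and $c$ is not $\PPPT$.

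The main obstacle I expect is the bookkeeping in the first step: getting the Pauli-transpose signs and the map $\tau$ right, so that the partially transposed state is indeed Bell-diagonal with the stated eigenvalues. Once this is settled, the permutation optimization is immediate, because $\tau$ is a bijection and permutations range over all placements anyway.
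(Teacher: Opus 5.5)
Your proposal is correct and follows essentially the same route as the paper's proof. Both arguments factorize $\Gamma_A$ over the two qubit pairs and use the sign pattern $1-2\delta$ to show that the eigenvalues of $\rho(c)^{\Gamma_A}$ are $\tfrac14\bigl(1-2\sum_{X}c\bigr)$, with $X$ a six-element ``cross'' in the $4\times4$ coefficient array; both then place the six largest coefficients on a cross. You also identify the common eigenbasis explicitly as the Bell basis, which the paper only asserts. One harmless correction: your $\tau$ is the translation $\mu\mapsto\mu+(1,1)$ by the $Y$-label, which exchanges $\Phi^\pm\leftrightarrow\Psi^\mp$ (e.g.\ $P_{\Phi^+}^{\Gamma}=\tfrac12 I-P_{\Psi^-}$), not ``the identity up to relabeling''; since the argument only uses that $\tau$ is a bijection, nothing changes.
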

The proof of Proposition~\ref{prop:Perm_PPT} is given in Appendix~\ref{app:lemma1}. Equivalently, we refer to such $c$ as a \(\PPPT\) coefficient vector, and denote the set of all such vectors by $\mathcal{C}$. Mathematically, the \(\PPPT\) set $\mathcal{C}$ is defined by
\begin{eqnarray}
\label{eq:C}
 \mathcal C=  \left\{c\in\mathbb R^{16}:
 c_{\mu\nu}\geq0,\ \sum_{\mu,\nu\in\F_2^2}c_{\mu\nu}=1,~\sum_{i=1}^6c_{(i)}\le\frac12\right\}.\nonumber\\
\end{eqnarray}
Note that the set $\mathcal C$ is invariant under all coordinate permutations.

\subsection{Complete vertex characterization of \texorpdfstring{$\mathcal{C}$}{C}}
\label{sec:vertices_C}
To characterize the \(\PPPT\) coefficient vectors $c$, we first label their coordinates by the decimal integers $\{1,2,\dots,16\}$. Let $n$ denote the number of non-zero entries in $c$, with $z = 16 - n$ zeros. The following lemma establishes a lower bound on $n$ that is necessary for a coefficient vector $c$ to remain in the \(\PPPT\) set $\mathcal C$.
\begin{lemma}
\label{lem:perm_ppt_support}
Every \(\PPPT\) coefficient vector $c \in \mathcal{C}$ contains at least twelve non-zero entries, i.e., $n \ge 12$ (or equivalently, $z \le 4$).
\end{lemma}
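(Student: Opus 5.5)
Our plan is to obtain this directly from the defining inequality of $\mathcal C$ in Eq.~\eqref{eq:C}, which is available by Proposition~\ref{prop:Perm_PPT}. We argue by contradiction. Suppose $c\in\mathcal C$ has $n\le 11$ nonzero entries. The key observation is that the six largest coefficients then carry too much weight.

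Concretely, the $n$ nonzero entries sum to $1$ and are ordered decreasingly, $c_{(1)}\ge\cdots\ge c_{(n)}>0$. The average of the six largest is at least the average of all $n$ nonzero ones. If $n\le 6$, then $\sum_{i=1}^6 c_{(i)}=1>1/2$ immediately. If $7\le n\le 11$, we use
\[
\frac{1}{6}\sum_{i=1}^{6}c_{(i)}\ \ge\ \frac{1}{n}\sum_{i=1}^{n}c_{(i)}=\frac1n ,
\]
which gives $\sum_{i=1}^6 c_{(i)}\ge 6/n\ge 6/11>1/2$. This contradicts $\sum_{i=1}^6c_{(i)}\le 1/2$, so $n\ge 12$, i.e.\ $z\le 4$.

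For completeness, we will verify the averaging inequality. Split the sum into the top six and the remaining $n-6$ terms. Each of the remaining terms is at most $c_{(6)}$, which in turn is at most the mean $m$ of the top six. Hence $1\le 6m+(n-6)m=nm$, which is the inequality above.

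We also note tightness. The vector $\bigl((1/12)^{\times12},0^{\times4}\bigr)$ gives $6/12=1/2$ and so lies in $\mathcal C$; hence the bound $n\ge12$ cannot be improved. There is no real obstacle in this argument. The only care needed is to rely on Proposition~\ref{prop:Perm_PPT} for the characterization of $\mathcal C$, rather than re-deriving the PPT condition.
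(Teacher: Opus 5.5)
Your proposal is correct and is essentially the paper's argument. The paper also bounds the tail $1-s$ by $(n-6)c_{(6)}\le (n-6)s/6$ to get $s\ge 6/n$ and then uses $s\le 1/2$. Your separate treatment of $n\le 6$ is a small point of extra care that the paper leaves implicit.
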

\begin{proof}
Assume the non-zero entries are sorted in descending order, with $\sum_{i=1}^n c_{(i)} = 1$ and $s := \sum_{i=1}^6 c_{(i)} \le 1/2$. Since $c_{(i)}$ is non-increasing, $c_{(6)} \le \frac{s}{6}$, and therefore $1 - s \le \frac{(n - 6)s}{6}.$ This yields $s \ge 6/n$. Since $s \le 1/2$, it follows that $n \ge 12$ and $z = 16 - n \le 4$.
\end{proof}
The following lemma provides a complete characterization of the extreme points of the \(\PPPT\) polytope $\mathcal{C}$. It shows that every vertex is a distinct coordinate permutation of one of the five explicit spectral profiles, indexed by the number $z$ of vanishing coefficients, with $z\le4$. This reduces the separability of all states in $\mathcal C$ to a finite set of vertices.
\begin{lemma}[Vertices of \(\mathcal C\)]
\label{thm:vertices}
The vertices of \(\mathcal C\) are exactly the distinct coordinate permutations of \(v(z)\), \(z=0,1,2,3,4\), where 
\begin{equation}
\label{eq:vz}
 v(z)=\left(a_z, ~b_z^{\times (15-z)},~0^{\times z}\right),
\end{equation}
and
\begin{equation}
\label{eq:ab}
 b_z=\frac1{2(10-z)},\qquad
 a_z=\frac{5-z}{2(10-z)}.
\end{equation}
\end{lemma}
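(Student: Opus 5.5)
The plan is to realize $\mathcal C$ as a polytope cut out by finitely many linear inequalities and then identify its vertices by a standard rank (tightness) count. First rewrite the nonlinear condition $\sum_{i=1}^6 c_{(i)}\le 1/2$ as the family of linear inequalities
\[
\sum_{x\in S} c_x\le \tfrac12\quad\text{for all } S\subseteq\F_2^4,\ |S|=6,
\]
together with $c_x\ge0$ and $\sum_x c_x=1$. This is valid because the sum of the six largest entries is the maximum over all $6$-subsets. Thus $\mathcal C$ is a permutation-invariant polytope in the $15$-dimensional affine hyperplane. A point is a vertex if and only if the tight constraints at it have rank $15$ on that hyperplane.

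\textbf{Membership and vertex property of $v(z)$.} First check that $v(z)\in\mathcal C$. Since $a_z\ge b_z$ for $z\le4$, the top six entries sum to $a_z+5b_z=(5-z+5)/(2(10-z))=1/2$, so the constraint is tight. The total is $a_z+(15-z)b_z=(5-z+15-z)/(2(10-z))=1$. The entries of $v(z)$ are strictly positive on $n=16-z$ coordinates, with one distinguished coordinate $x_0$ carrying $a_z$.

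\begin{itemize}
\item \emph{Case $z\le 3$.} Here $a_z>b_z$. The tight $6$-subset constraints are exactly $S=\{x_0\}\cup T$ with $T$ any $5$-subset of the $b$-coordinates; this family has $15-z\ge 12>5$ members. Take differences of two such constraints whose $T$'s differ in one element: this forces $c_y=c_{y'}$ for all $b$-coordinates $y,y'$. The $z$ tight nonnegativity constraints then fix the zeros. Finally, one subset constraint together with normalization determines $a$ and $b$ uniquely. Hence the tight system has a unique solution, so $v(z)$ is a vertex.
\item \emph{Case $z=4$.} Here $a_4=b_4=1/12$, and $v(4)$ is the uniform vector on twelve coordinates. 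Every $6$-subset of the support is tight. Their differences again equalize the support entries, and normalization finishes the argument.
\end{itemize}

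\textbf{Converse: every vertex is a permutation of some $v(z)$.} Let $c$ be a vertex. By Lemma~\ref{lem:perm_ppt_support}, $c$ has $z\le4$ zeros. Write $t:=c_{(6)}$ and group the entries into three blocks: $A=\{x: c_x>t\}$, $B=\{x: c_x=t\}$, and the rest with $0<c_x<t$, plus the zeros. The plan is to exhibit a small perturbation $c\pm\epsilon d$ staying in $\mathcal C$ whenever $c$ is not of the stated form. Three moves will be tried in order.
\begin{enumerate}
\item If two strictly positive entries lie strictly below $t$, transfer mass between them. This preserves the sum and leaves every top-six sum unchanged.
\item If $A$ contains two entries, transfer mass between them. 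Since every tight $6$-subset contains all of $A$, the top-six sum is preserved.
\item If an entry sits strictly below $t$ but $c_{(1)}$ is not the only entry above $t$, shift mass between that entry and the $B$-block uniformly. Here the perturbation must be balanced so that the sum of the six largest entries stays exactly $1/2$.
\end{enumerate}
After excluding these moves, the only surviving configurations are: at most one entry above the common level $t$, all other positive entries equal to $t$, and the remaining entries zero. This is precisely $v(z)$ up to permutation, with the values forced by tightness and normalization. A tightness constraint must hold: if $\sum_{i\le6}c_{(i)}<1/2$, then $c$ is a vertex of the simplex face, which has too few nonzeros, contradicting Lemma~\ref{lem:perm_ppt_support}.

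\textbf{Expected obstacle.} The main difficulty will be making the perturbation in case 3 rigorous, especially when $|B|$ straddles position six. In that situation ties mean the set of tight $6$-subsets is large, and perturbing within $B$ changes which entries are in the top six. I would handle this by perturbing all of $B$ uniformly by $\epsilon$ and compensating on the sub-$t$ or $A$ entries. Then I would check case by case, according to how many $B$-entries are among the top six, that the maximal $6$-sum is still $\le1/2$ to first order in $\epsilon$.
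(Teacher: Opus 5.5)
\textbf{Verdict: same route as the paper, but with one concrete gap in move 3.}

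Your route is the paper's. You use the six-subset linear description of $\mathcal C$, and you show the tight system at $v(z)$ has a unique solution, which is equivalent to the paper's ``no nontrivial convex decomposition'' argument. You also organize the perturbations around $t=c_{(6)}$, the set $A$ of entries above $t$, and the positive entries below $t$. Your moves 1 and 2 are exactly the paper's.

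\textbf{The case $|A|=1$.} Move 3 is attached to the wrong case and uses an insufficient compensation. After moves 1--2 you have $|A|\le 1$ and at most one positive entry $s<t$. The case $|A|=1$ is exactly the one your condition ``$c_{(1)}$ is not the only entry above $t$'' excludes. Here you need a perturbation that moves \emph{both} $a=c_{(1)}$ and $s$. With $m=|B|$,
\[
\delta a=-5\varepsilon,\qquad \delta c_x=+\varepsilon\ \ (x\in B),\qquad \delta s=-(m-5)\varepsilon .
\]
This keeps every tight sum $a+5t$ and the normalization exactly fixed, and it is the paper's move in its $r=1$ case. Compensating on only one of them fails:
\begin{itemize}
\item compensating on $s$ alone shifts each tight sum by $5\varepsilon$;
\item compensating on $a$ alone breaks normalization unless $m=5$.
\end{itemize}
As written, your three moves leave $\bigl(\tfrac15,(\tfrac3{50})^{\times13},\tfrac1{50},0\bigr)\in\mathcal C$ untouched. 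It satisfies $\tfrac15+5\cdot\tfrac3{50}=\tfrac12$ and sums to $1$, yet it is not a vertex. Because $B$ is shifted uniformly, ties inside $B$ persist and the top-six sum stays exactly $a+5t$. So the ``straddling'' issue you expect to be the main obstacle does not arise, and no first-order case analysis is needed.

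\textbf{The case $|A|=0$.} This is the case your condition actually selects, and here a uniform shift of $B$ against $s$ cannot be balanced at all. Every tight constraint is then a six-subset of $B$, so a uniform shift changes it by $6\varepsilon$. The case has to be excluded arithmetically instead, as the paper does:
\begin{itemize}
\item tightness gives $6t=\tfrac12$, so $t=\tfrac1{12}$;
\item normalization then leaves mass $(12-m)/12$ on entries each strictly below $\tfrac1{12}$;
\item hence there are either no such entries ($m=12$, giving $v(4)$) or at least two (move 1 applies), never exactly one.
\end{itemize}
With move 3 replaced by these two cases, your argument goes through and coincides with the paper's proof.
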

We refer the reader to Appendix~\ref{app:lemma1} for the proof of the Lemma~\ref{thm:vertices}. Below, we list all vertices of $\mathcal C$ along with their multiplicities.
\begin{corollary}[Number of vertices]
\label{cor:11036}
The polytope $\mathcal{C}$ possesses exactly $11036$ extreme points, partitioned across five permutation classes indexed by the number of vanishing components $z \in \{0, 1, 2, 3, 4\}$:
\begin{center}
\setlength{\tabcolsep}{7pt}
\renewcommand{\arraystretch}{1.75}
\begin{tabular}{c c c c}
\toprule
\(z\) & Spectral profile & Zero count & Multiplicity\\
\midrule
$0$ & \(\left(a_0,b_0^{\times15}\right)\) & $0$ & \(16\)\\
$1$ & \(\left(a_1,b_1^{\times14},0\right)\) & $1$ & \(16\binom{15}{1}=240\)\\
$2$ & \(\left(a_2,b_2^{\times13},0^{\times2}\right)\) & $2$ & \(16\binom{15}{2}=1680\)\\
$3$ & \(\left(a_3,b_3^{\times12},0^{\times3}\right)\) & $3$ & \(16\binom{15}{3}=7280\)\\
$4$ & \(\left((1/12)^{\times12},0^{\times4}\right)\) & $4$ & \(\binom{16}{4}=1820\)\\
\midrule
\multicolumn{3}{l}{\textbf{Total vertices}} & $\mathbf{11036}$ \\
\bottomrule
\end{tabular}
\end{center}
\end{corollary}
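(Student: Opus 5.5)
The count follows from Lemma~\ref{thm:vertices}. There the vertex set of $\mathcal C$ is exactly the union over $z=0,\dots,4$ of the $S_{16}$-orbits of $v(z)$. The plan is therefore to compute the size of each orbit with the multinomial formula used in the two-qutrit case, check that the orbits do not overlap, and add the sizes.

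First we fix the distinct entries of each profile. By Eq.~\eqref{eq:ab}, $a_z=(5-z)/(2(10-z))$ and $b_z=1/(2(10-z))$, so $a_z-b_z=(4-z)/(2(10-z))$. For $z\le3$ this gives $a_z>b_z>0$. Hence $v(z)$ has three distinct values, $a_z$, $b_z$ and $0$, with multiplicities $1$, $15-z$ and $z$; for $z=0$ the value $0$ is simply absent. For $z=4$ we get $a_4=b_4=1/12$, so $v(4)=\bigl((1/12)^{\times12},0^{\times4}\bigr)$ and only two distinct values remain. This collapse at $z=4$ is the one point that needs care, since a naive count would give $16\binom{15}{4}$.

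Next we count the orbits. For $z\le3$, a permutation of $v(z)$ is fixed by choosing the position of $a_z$ (16 ways) and then the $z$ zero positions among the remaining 15 ($\binom{15}{z}$ ways). This gives $16!/(1!\,(15-z)!\,z!)=16\binom{15}{z}$, that is, $16$, $240$, $1680$ and $7280$. For $z=4$ only the four zero positions matter, giving $\binom{16}{4}=1820$.

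The classes are disjoint because permutations preserve the number of zero entries, and that number is $z$. Summing gives $16+240+1680+7280+1820=11036$. Every listed vector is a vertex and there are no others, by Lemma~\ref{thm:vertices}. No step is a real obstacle beyond checking the coincidence $a_4=b_4$.
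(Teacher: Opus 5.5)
Your proposal is correct and follows the same route as the paper. The paper reads the corollary directly off Lemma~\ref{thm:vertices} by counting the distinct coordinate permutations of each $v(z)$ with the multinomial formula, so the only care needed is the collapse $a_4=b_4=1/12$ at $z=4$. Your separation of the classes by zero count and your sum $16+240+1680+7280+1820=11036$ match the paper.
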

\subsection{Separability of two-ququart Bell-diagonal \(\PPPT\) set}
We now show that the $11036$ vertices decompose into $13$ inequivalent orbits under the action of the affine symplectic group $\operatorname{ASp}(4,2)$. As in the two-qutrit setting, the symmetries in $\operatorname{ASp}(4,2)$ preserve the entanglement properties of the corresponding states. The action of the affine symplectic group on the two-ququart lattice Bell simplex is analyzed in detail in Appendix~\ref{sec:affinesymm_4x4}. Consequently, certifying the separability of a single target representative from each of these $13$ orbits is sufficient to establish that every vertex of $\mathcal{C}$ is associated with a separable two-ququart lattice Bell-diagonal state.
\subsubsection{Thirteen affine-symplectic orbits}
\label{subsec:ququart-separability}
To determine the orbit decomposition of the $11036$ vertices under the action of $\mathrm{ASp}(4,2)$, we analyze the possible placements of the coefficients in each of the five spectral types \(v(z)\) from Corollary~\ref{cor:11036} according to their affine symplectic geometry. 

For \(z\leq 3\), one has \(a_z>b_z\), so a vertex is specified by
(i) a distinguished label \(d\in\mathbb F_2^4\) carrying the coefficient
\(a_z\), (ii) an unordered set \(Z\subset\mathbb F_2^4\setminus\{d\}\)
of \(z\) zero positions, and (iii) the coefficient \(b_z\) at every
remaining position. We therefore write
\begin{equation}
    v_{d,Z}^{(z)}(x)=
\begin{cases}
a_z, & x=d,\\
0,   & x\in Z,\\
b_z, & \text{otherwise},
\end{cases}
\qquad x\in\mathbb F_2^4,
\label{eq:vdz}
\end{equation}
where \(a_z\) and \(b_z\) are given in Eq.~\eqref{eq:ab}. For \(z=4\),
one has \(a_4=b_4=1/12\), so there is no distinguished label \(d\); the
vertex is completely determined by its four-element zero set
\(Z\subset\mathbb F_2^4\).

Consider an affine symplectic transformation \( g(x)=Sx+t, \; S\in\operatorname{Sp}(4,2),\; t\in\mathbb F_2^4. \) For \(z\leq3\), choosing \(t=Sd\) gives \( g(d)=0\) and \(g(x)=S(x+d),\) where we used the fact that addition and subtraction coincide over \(\mathbb F_2\). Consequently, after moving the distinguished label to the zero-th position, the classification reduces to the action of \(\operatorname{Sp}(4,2)\) on the translated zero set \(Z+d:=\{x+d:x\in Z\}.\) The symplectic products of the corresponding displacement vectors are invariant:
\begin{equation}
\label{eq:difference-invariant}
\bigl[g(x)+g(d),\,g(y)+g(d)\bigr] = [x+d,\,y+d].
\end{equation}
Together with the linear-dependence relations among the translated vectors, these pairwise symplectic products completely determine the relevant orbit types. More precisely, let
\(\mathcal U=\{u_1,\ldots,u_r\}\) and
\(\mathcal U'=\{u_1',\ldots,u_r'\}\) be two translated configurations, suitably relabeled. Suppose that
\[
 \sum_{i=1}^{r}\alpha_i u_i=0
 \quad\Longleftrightarrow\;
 \sum_{i=1}^{r}\alpha_i u_i'=0
\]
for every \(\alpha_i\in\F_2,\) and that
\(
 [u_i,u_j]=[u_i',u_j']
 ~\forall ~i,j.
\)
The first condition ensures that the assignment \(u_i\mapsto u_i'\) induces a well-defined linear map
\[
 T:\operatorname{span}(\mathcal U)
 \longrightarrow
 \operatorname{span}(\mathcal U').
\]
The second condition, together with bilinearity, implies
\[
 [T\xi,T\eta]=[\xi,\eta]
 \qquad
 \text{for all }
 \xi,\eta\in\operatorname{span}(\mathcal U),
\]
so \(T\) is a symplectic isometry between the two spans. Witt's extension theorem then guarantees that \(T\) can be extended to a global symplectic transformation
\(
 S\in\Sp(4,2)
\)
acting on all of \(\F_2^4\) \cite[Theorem~7.4]{Taylor1992}. Consequently, \(S(\mathcal U)=\mathcal U'\); after restoring the affine translations, the original configurations belong to the same \(\ASp(4,2)\)-orbit. Thus, two translated configurations of zero-coefficient labels belong to the same \(\operatorname{Sp}(4,2)\)-orbit if and only if, after suitably matching their elements, they have the same linear-dependence relations and pairwise symplectic products.

The detailed representative counting arguments for \(z=0,1,2,3,4\) are given in Appendix~\ref{app:lattice-orbits_4x4}. In Appendix~\ref{app:numerical_algorithm}, following the approach used in the two-qutrit setting, we briefly describe the numerical algorithm to systematically extract both the orbit sizes and their canonical representatives. The numbers of affine symplectic orbits arising from the five spectral types are, respectively, \[ (1,1,2,5,4), \] giving \(13\) distinct orbits in total. Grouped by the zero count \(z\), their sizes are
\[
\begin{aligned}
z=0 &: \quad \{16\},\\
z=1 &: \quad \{240\},\\
z=2 &: \quad \{960,720\},\\
z=3 &: \quad \{320,2880,2880,960,240\},\\
z=4 &: \quad \{80,1440,240,60\}.
\end{aligned}
\]
The sum of these orbit sizes is \(11036\).
Table~\ref{tab:lattice-orbits} records the corresponding representatives, and orbit sizes. The classification is based on the ranks, linear-dependence relations, and symplectic products of suitable difference sets associated with the zero positions.

\begin{table}[htbp]
\setlength{\tabcolsep}{10pt}
\small
\renewcommand{\arraystretch}{1.5}
\centering
\caption{The resulting thirteen affine-symplectic lattice orbits are summarized below, adopting the notation $v_{d,Z}^z$ introduced in Eq.~\eqref{eq:vdz}. As an illustrative example, the representative of the fifth orbit is given by the coefficient vector $(a_3, 0^{\times 3}, b_3^{\times 12})$, where $a_3 = 1/7$ and $b_3 = 1/14$ (Eq.~\eqref{eq:ab}).}
\label{tab:lattice-orbits}
\begin{tabular}{c c c c r}
\toprule
orbit&\(z\)&\(d\)&\(Z\)&multiplicity\\
\midrule
1&0&\(\hex{0}\)&\(\varnothing\)&16\\
2&1&\(\hex{0}\)&\(\{\hex{1}\}\)&240\\
3&2&\(\hex{0}\)&\(\{\hex{1},\hex{2}\}\)&960\\
4&2&\(\hex{0}\)&\(\{\hex{1},\hex{4}\}\)&720\\
5&3&\(\hex{0}\)&\(\{\hex{1},\hex{2},\hex{3}\}\)&320\\
6&3&\(\hex{0}\)&\(\{\hex{1},\hex{2},\hex{4}\}\)&2880\\
7&3&\(\hex{0}\)&\(\{\hex{1},\hex{2},\hex{5}\}\)&2880\\
8&3&\(\hex{0}\)&\(\{\hex{1},\hex{2},\hex{7}\}\)&960\\
9&3&\(\hex{0}\)&\(\{\hex{1},\hex{4},\hex{5}\}\)&240\\
10&4&--&\(\{\hex{0},\hex{1},\hex{2},\hex{3}\}\)&80\\
11&4&--&\(\{\hex{0},\hex{1},\hex{2},\hex{4}\}\)&1440\\
12&4&--&\(\{\hex{0},\hex{1},\hex{2},\hex{7}\}\)&240\\
13&4&--&\(\{\hex{0},\hex{1},\hex{4},\hex{5}\}\)&60\\
\midrule
\multicolumn{4}{r}{total}&11036\\
\bottomrule
\end{tabular}
\end{table}
\subsubsection{Separability of two-ququart lattice Bell-diagonal states associated to affine-symplectic orbits}
It now remains to establish the separability of the two-ququart lattice Bell-diagonal states corresponding to the thirteen affine-symplectic orbits. To accomplish this, let us identify the sixteen elements of \(\mathbb F_2^4\) with the hexadecimal symbols \( \mathtt{0},\mathtt{1},\ldots,\mathtt{9}, \mathtt{A},\ldots,\mathtt{F}, \)
according to \( \mathtt{0}\leftrightarrow0000,\; \mathtt{1}\leftrightarrow0001,\;\ldots,\; \mathtt{F}\leftrightarrow1111.\)
If an affine Lagrangian plane is \(L=\{a,b,c,d\}\)\footnote{Since every affine Lagrangian plane \(L\subset V_4=\mathbb F_2^4\) contains four points, we write
\(L=\{a,b,c,d\},\) where \(a,b,c,d\in\mathbb F_2^4.\)}, we denote it by the hexadecimal code \(\ell=\mathtt{abcd}\), with its elements written in increasing hexadecimal order, and define
\[
\sigma_{\ell} := \frac14\left(P_a+P_b+P_c+P_d\right).
\]
For example,
\[
\sigma_{\mathtt{45CD}} = \frac14\left( P_{0100}+P_{0101}+P_{1100}+P_{1101} \right).
\]
For each orbit representative \(\rho_r\), let
\(
\mathcal S_r = \left(\ell_{r,1},\ldots,\ell_{r,m_r}\right)
\)
denote the ordered list of hexadecimal plane codes appearing in its decomposition. The corresponding rational weights satisfy
\begin{equation}
\label{eq:sep_decom_4x4}
    \rho_r = \sum_{j=1}^{m_r} q_{rj}\sigma_{\ell_{r,j}}, \;\; q_{rj}>0, \;\; \sum_{j=1}^{m_r}q_{rj}=1.
\end{equation}
Thus, every entry of \(\mathcal S_r\) directly specifies the four lattice Bell projectors in the corresponding separable plane state. The supports and exact rational weights are given in Table~\ref{tab:lattice-decomposition-certificates}, while the orbit representatives are given in Table~\ref{tab:lattice-orbits}. Since, by Lemma~\ref{lemma:Coset_separability}, every $\sigma_{\ell_{r,j}}$ is separable, the decompositions in Eq.~\eqref{eq:sep_decom_4x4} establish the separability of all thirteen representatives, and hence of their complete affine-symplectic orbits. The accompanying \textit{Wolfram Mathematica} notebook~\cite{SupplementalMaterial} independently verifies all identities exactly.

\begin{table*}[t]
\caption{Exact convex decompositions of the thirteen affine-symplectic
orbit representatives into separable affine Lagrangian-plane states.
Here \(\sigma_{\mathtt{abcd}}=\frac14(P_a+P_b+P_c+P_d)\), where the hexadecimal symbols \(\mathtt{0},\ldots,\mathtt{F}\) represent the corresponding four-bit
labels in \(\mathbb F_2^4\) according to \( \mathtt{0}\leftrightarrow0000,\; \mathtt{1}\leftrightarrow0001,\;\ldots,\; \mathtt{F}\leftrightarrow1111\). The thirteen representatives $\{\rho_r\}_{r=1}^{13}$ are listed in Table~\ref{tab:lattice-orbits}. All the displayed identities are exact over the rationals. The \textit{Wolfram Mathematica} notebook verifying them is provided in the Supplementary Material~\cite{SupplementalMaterial}.}
\label{tab:lattice-decomposition-certificates}
\centering
\small
\setlength{\tabcolsep}{15pt}
\renewcommand{\arraystretch}{2.25}
\begin{tabular}{c c c p{12.0cm}}
\toprule
\(r\) & \(z\) & orbit & exact weights \(q_{rj}\)\\
\midrule
1 & 0 & 16 & \(\displaystyle\rho_1=\frac15\bigl( \sigma_{\mathtt{0145}} +\sigma_{\mathtt{02CE}} +\sigma_{\mathtt{038B}} +\sigma_{\mathtt{069F}} +\sigma_{\mathtt{07AD}} \bigr). \)\\
2 & 1 & 240 & \(\displaystyle\rho_2=\frac1{18}\bigl(\sigma_{\mathtt{028A}}
+\sigma_{\mathtt{038B}}+\sigma_{\mathtt{069F}}+\sigma_{\mathtt{079E}}+\sigma_{\mathtt{4589}}+\sigma_{\mathtt{89CD}}\bigr)+\frac16\bigl(\sigma_{\mathtt{02CE}}+\sigma_{\mathtt{0347}}+\sigma_{\mathtt{05AF}}+\sigma_{\mathtt{06BD}}\bigr).\)\\
3 & 2 & 960 & \(\displaystyle\rho_3=\frac18\bigl(\sigma_{\mathtt{0347}}+\sigma_{\mathtt{03CF}}+\sigma_{\mathtt{05AF}}+\sigma_{\mathtt{079E}}+\sigma_{\mathtt{4589}}+\sigma_{\mathtt{8ACE}}\bigr)+\frac14\sigma_{\mathtt{06BD}}.\)\\
4 & 2 & 720 & \(\displaystyle\rho_4=\frac18\bigl(\sigma_{\mathtt{028A}}+\sigma_{\mathtt{03CF}}+\sigma_{\mathtt{069F}}+\sigma_{\mathtt{07AD}}+\sigma_{\mathtt{2367}}+\sigma_{\mathtt{89CD}}\bigr)+\frac14\sigma_{\mathtt{05BE}}.\)\\
5 & 3 & 320 & \(\displaystyle\rho_5=\frac17\bigl(\sigma_{\mathtt{05BE}}+\sigma_{\mathtt{069F}}+\sigma_{\mathtt{079E}}+\sigma_{\mathtt{07AD}}+\sigma_{\mathtt{45CD}}+\sigma_{\mathtt{468A}}+\sigma_{\mathtt{8BCF}}\bigr).\)\\
6 & 3 & 2880 & \(\displaystyle\rho_6=\frac17\bigl(\sigma_{\mathtt{03CF}}+\sigma_{\mathtt{069F}}+\sigma_{\mathtt{06BD}}+\sigma_{\mathtt{07AD}}+\sigma_{\mathtt{358E}}+\sigma_{\mathtt{579B}}+\sigma_{\mathtt{8ACE}}\bigr).\)\\
7 & 3 & 2880 & \(\displaystyle\rho_7=\frac17\bigl(\sigma_{\mathtt{0347}}+\sigma_{\mathtt{03CF}}+\sigma_{\mathtt{06BD}}+\sigma_{\mathtt{079E}}+\sigma_{\mathtt{468A}}+\sigma_{\mathtt{8BCF}}+\sigma_{\mathtt{9ADE}}\bigr).\)\\
8 & 3 & 960 & \(\displaystyle\rho_8=\frac27\sigma_{\mathtt{038B}}+\frac17\bigl(\sigma_{\mathtt{05AF}}+\sigma_{\mathtt{069F}}+\sigma_{\mathtt{45CD}}+\sigma_{\mathtt{46CE}}+\sigma_{\mathtt{9ADE}}\bigr).\)\\
9 & 3 & 240 & \(\displaystyle\rho_9=\frac17\bigl(\sigma_{\mathtt{028A}}+\sigma_{\mathtt{03CF}}+\sigma_{\mathtt{06BD}}+\sigma_{\mathtt{079E}}+\sigma_{\mathtt{279C}}+\sigma_{\mathtt{368D}}+\sigma_{\mathtt{ABEF}}\bigr).\)\\
10 & 4 & 80 & \(\displaystyle\rho_{10}=\frac16\bigl(\sigma_{\mathtt{478B}}+\sigma_{\mathtt{47CF}}
+\sigma_{\mathtt{569A}}+\sigma_{\mathtt{56DE}}+\sigma_{\mathtt{8BCF}}+\sigma_{\mathtt{9ADE}}\bigr).\)\\
11 & 4 & 1440 & \(\displaystyle\rho_{11}=\frac16\bigl(\sigma_{\mathtt{358E}}+\sigma_{\mathtt{35AC}}+\sigma_{\mathtt{67AB}}+\sigma_{\mathtt{67EF}}+\sigma_{\mathtt{89CD}}+\sigma_{\mathtt{9BDF}}\bigr).\)\\
12 & 4 & 240 & \(\displaystyle\rho_{12}=\frac16\bigl(\sigma_{\mathtt{349E}}+\sigma_{\mathtt{368D}}+\sigma_{\mathtt{45CD}}+\sigma_{\mathtt{569A}}+\sigma_{\mathtt{8BCF}}+\sigma_{\mathtt{ABEF}}\bigr).\)\\
13 & 4 & 60 & \(\displaystyle\rho_{13}=\frac16\bigl(\sigma_{\mathtt{23EF}}+\sigma_{\mathtt{278D}}+\sigma_{\mathtt{369C}}+\sigma_{\mathtt{67AB}}+\sigma_{\mathtt{8ACE}}+\sigma_{\mathtt{9BDF}}\bigr).\)\\
\bottomrule
\end{tabular}
\end{table*}

\begin{theorem}
\label{thm:lattice-main}
Let $\rho(c)$ be a two-ququart lattice Bell-diagonal state. If $\rho(\pi c)^{\Gamma_A} \ge 0$ for all permutations $\pi \in S_{16}$, then $\rho(c)$ is separable. Consequently, every permuted state $\rho(\pi c)$ is also separable.
\end{theorem}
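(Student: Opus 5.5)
The plan is to assemble the results already established in this section into a convexity argument. Suppose $\rho(\pi c)^{\Gamma_A}\ge0$ for every $\pi\in S_{16}$. Then, by definition, $c$ is a $\PPPT$ coefficient vector. Proposition~\ref{prop:Perm_PPT} gives $\sum_{i=1}^6 c_{(i)}\le 1/2$, so $c\in\mathcal C$ as defined in Eq.~\eqref{eq:C}.

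The set $\mathcal C$ is the intersection of the probability simplex with finitely many linear inequalities. Indeed, the condition $\sum_{i=1}^6 c_{(i)}\le1/2$ is equivalent to requiring $\sum_{x\in I}c_x\le 1/2$ for every six-element subset $I\subset\F_2^4$. Hence $\mathcal C$ is a bounded polytope and equals the convex hull of its vertices. By Lemma~\ref{thm:vertices} and Corollary~\ref{cor:11036}, these are the $11036$ distinct coordinate permutations of $v(z)$ with $z=0,\dots,4$. We can therefore write $c=\sum_k t_k w_k$, where the $w_k$ are vertices, $t_k\ge0$ and $\sum_k t_k=1$. Linearity of $c\mapsto\rho(c)$ then gives $\rho(c)=\sum_k t_k\rho(w_k)$, so it suffices to show that each $\rho(w_k)$ is separable.

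For the vertices, I would use the orbit decomposition of Table~\ref{tab:lattice-orbits}. Each vertex $w$ equals $g\cdot v$ for some $g\in\ASp(4,2)$ and some representative $v$ among the thirteen listed. The key fact, justified in Appendix~\ref{sec:affinesymm_4x4}, is that each $g$ acts on the Bell projectors via a local unitary. Translations correspond to $I\otimes Q_t$. Symplectic maps correspond to Clifford unitaries of the form $C^*\otimes C$, which satisfy $(C^*\otimes C)P_x(C^*\otimes C)^\dagger=P_{Sx}$ because $C Q_x C^\dagger\propto Q_{Sx}$ and $(M^*\otimes M)$ preserves $\ket{\Omega}$ for unitary $M$. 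Consequently $\rho(g\cdot v)$ is locally unitarily equivalent to $\rho(v)$ and is separable whenever $\rho(v)$ is. The separability of the thirteen representatives follows from the exact decompositions in Table~\ref{tab:lattice-decomposition-certificates}, since each $\sigma_\ell$ is separable by Lemma~\ref{lemma:Coset_separability}. So every vertex state is separable, and $\rho(c)$, as a convex mixture of them, is separable too.

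For the final clause of the theorem, note that $\mathcal C$ is invariant under coordinate permutations. Hence $\pi c\in\mathcal C$ for every $\pi$, and the same argument applied to $\pi c$ shows that $\rho(\pi c)$ is separable. All the substantive work lies in the earlier results: Proposition~\ref{prop:Perm_PPT}, the vertex enumeration, and the completeness of the orbit classification, where Witt's theorem guarantees that no vertex is missed. Within this proof, the only step needing care is the claim that the $\ASp(4,2)$ action is implemented by local unitaries, so that separability is preserved. I would cite that appendix for it, noting that the sign and phase conventions of $Q_{p,q}=i^{pq}X^pZ^q$ only contribute global phases and do not affect the projectors.
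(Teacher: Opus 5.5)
Your proposal is correct and follows the paper's proof essentially step for step. You obtain $c\in\mathcal C$ via Proposition~\ref{prop:Perm_PPT}, decompose $c$ into the $11036$ vertices, and reduce to the thirteen $\ASp(4,2)$ representatives using the local Clifford unitaries $C_S^*\otimes Q_tC_S$ of Eq.~\eqref{eq:lattice-projector-action}. From there you use the exact Lagrangian-plane certificates, and then the permutation invariance of $\mathcal C$ for the final clause. Your extra remarks, namely that $\mathcal C$ is cut out by the six-subset inequalities (hence a polytope) and the check that $(M^*\otimes M)\ket{\Omega}=\ket{\Omega}$, just make explicit what the paper's short proof takes for granted.
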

\begin{proof}
By Eqs.~\eqref{eq:lattice-permutation-PPT-largest-six} and~\eqref{eq:C}, the hypothesis is equivalent to $c \in \mathcal{C}$. By Corollary~\ref{cor:11036}, the polytope $\mathcal{C}$ is the convex hull of its $11036$ extreme points, so $c$ decomposes as a convex combination of these vertices. By Eq.~\eqref{eq:lattice-projector-action}, every vertex is locally Clifford equivalent to one of the thirteen canonical orbit representatives, each of which is certified separable by the exact decomposition in Eq.~\eqref{eq:sep_decom_4x4}. Because local unitaries and convex mixtures preserve separability, every vertex corresponds to a separable state, implying that every state in $\mathcal{C}$ is separable. Finally, because $\mathcal{C}$ is invariant under all coordinate permutations $\pi \in S_{16}$, the same separability guarantee applies to every permuted state $\rho(\pi c)$.
\end{proof}

Taking the contrapositive of Theorem~\ref{thm:lattice-main} yields the operationally useful characterization
\begin{equation}
\label{eq:lattice-contrapositive}
\rho(c) \text{ is entangled} \implies \rho(\pi c) \text{ is NPT for some } \pi \in S_{16}.
\end{equation}
While PPT bound-entangled lattice states do exist~\cite{BenattiLiguori2007,BenattiKarbalaii2013}, Eq.~\eqref{eq:lattice-contrapositive} reveals that no such state can remain PPT under all permutations of its Bell coefficients. Consequently, any entangled two-ququart lattice Bell-diagonal state is necessarily non-$\APPT$, leading directly to the following main result:
\begin{theorem}
Every absolutely PPT two-ququart lattice Bell-diagonal state is separable. Equivalently,
\begin{equation}
\APPT_{4,4}\cap\BD(4)\subseteq\SEP .
\end{equation}
\end{theorem}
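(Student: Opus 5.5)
The plan is to reduce the statement to Theorem~\ref{thm:lattice-main} through the inclusion in Eq.~\eqref{eq:APPT-implies-PPPT}. Take a two-ququart lattice Bell-diagonal state $\rho(c)=\sum_{x\in\F_2^4}c_xP_x$ that is $\APPT$, and fix an arbitrary permutation $\pi\in S_{16}$.

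\textbf{Step 1: realize the coefficient permutation by a global unitary.} The lattice Bell vectors $\{\ket{\Omega_x}\}_{x\in\F_2^4}$ form an orthonormal basis of $\mathbb C^4\otimes\mathbb C^4$. Hence the linear map
\[
U_\pi:=\sum_{x}\ket{\Omega_{\pi(x)}}\bra{\Omega_x}
\]
is a global unitary in $\mathrm U(16)$ satisfying $U_\pi P_xU_\pi^\dagger=P_{\pi(x)}$. Conjugating gives
\[
U_\pi\rho(c)U_\pi^\dagger=\sum_x c_xP_{\pi(x)}=\sum_y(\pi c)_yP_y=\rho(\pi c),
\]
by the convention of Eq.~\eqref{eq:perm-action}.

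\textbf{Step 2: pass from $\APPT$ to $\PPPT$.} By the definition of $\APPT$, $(U_\pi\rho(c)U_\pi^\dagger)^{\Gamma_A}\ge0$, which says $\rho(\pi c)^{\Gamma_A}\ge0$. Since $\pi$ was arbitrary, $c$ is a $\PPPT$ coefficient vector. This is exactly Eq.~\eqref{eq:APPT-implies-PPPT} in the present setting.

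\textbf{Step 3: invoke Theorem~\ref{thm:lattice-main}.} The hypothesis of Theorem~\ref{thm:lattice-main} is that $\rho(\pi c)^{\Gamma_A}\ge0$ for all $\pi$, which is what Step~2 established. The theorem then yields that $\rho(c)$ is separable, giving $\APPT_{4,4}\cap\BD(4)\subseteq\SEP$.

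All the substantive work sits inside Theorem~\ref{thm:lattice-main}: the characterization $\sum_{i\le6}c_{(i)}\le1/2$ from Proposition~\ref{prop:Perm_PPT}, the vertex enumeration of Lemma~\ref{thm:vertices}, and the thirteen exact Lagrangian-plane decompositions. The present statement itself has no real obstacle. The only point to check is that the permutation unitary $U_\pi$ is a legitimate global unitary. It is, because the Bell vectors form an orthonormal basis. No locality of $U_\pi$ is needed, since $\APPT$ quantifies over all of $\mathrm U(16)$.
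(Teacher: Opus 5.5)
Your proposal is correct and follows essentially the same route as the paper. The paper also notes that the Bell-basis permutation unitaries are among the global unitaries covered by the definition of $\APPT$, so every $\APPT$ lattice Bell-diagonal state is $\PPPT$, and then it invokes Theorem~\ref{thm:lattice-main}; your explicit construction $U_\pi=\sum_x\ket{\Omega_{\pi(x)}}\bra{\Omega_x}$ and the check $U_\pi\rho(c)U_\pi^\dagger=\rho(\pi c)$ simply spell out the step the paper states in one line.
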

\begin{proof}
Because absolute PPT requires PPT preservation under all global unitaries, it applies in particular to the Bell-basis permutation unitaries implementing arbitrary phase-space relabeling \(\pi\in S_{16}\). Hence every $\APPT$ lattice Bell-diagonal state is \(\PPPT\). By Theorem~\ref{thm:lattice-main}, every state in the \(\PPPT\) polytope $\mathcal{C}$ is separable, thereby completing the proof.
\end{proof}
\section{Conclusion}
\label{sec:conclusion}

In this work, we have investigated whether absolutely PPT ($\APPT$) states can be entangled within two natural Bell-diagonal families. We have established that every $\APPT$ state diagonal in the standard two-qutrit Weyl--Heisenberg Bell basis, or in the two-ququart lattice Bell basis, is separable. In fact, our separability results extend to larger classes of states than the $\APPT$ states alone.

For a Bell-diagonal state $\rho(c)=\sum_i c_iP_i$, the Bell coefficients $c_i\geq0$, with $\sum_i c_i=1$, are its eigenvalues; let $c_{(1)}\geq\cdots\geq c_{(d^2)}$ denote their decreasing rearrangement. For the two-qutrit Weyl--Heisenberg Bell basis, we prove that the inequality $c_{(1)}\leq c_{(7)}+c_{(8)}+c_{(9)}$ is sufficient for separability. Since every $\APPT$ spectrum satisfies this inequality, every $\APPT$ state diagonal in this basis is separable. The inequality also provides a directly checkable sufficient criterion for separability for states beyond the $\APPT_{3,3}$ set, and it guarantees that the state remains separable even under the conjugation of every global unitary that permutes the Bell coefficients among the Bell states of this basis.

For the $4\otimes4$ lattice basis, we proved that every entangled lattice Bell-diagonal state becomes NPT after a suitable permutation of its Bell coefficients. Equivalently, every \(\PPPT\) lattice Bell-diagonal state is separable. We demonstrated that the \(\PPPT\) region is exactly the set defined by the spectral inequality $\sum_{i=1}^{6}c_{(i)}\leq1/2$, and established its separability through affine-symplectic orbit reduction and exact convex decompositions into uniform affine Lagrangian-coset states. Since each coefficient permutation is implemented by a global unitary, every entangled lattice Bell-diagonal state fails to be $\APPT$.

Our results establish the separability of all $\APPT$ states diagonal in the two Bell-diagonal families considered here. Absolute separability, however, additionally requires the separability of every global-unitary conjugate, including conjugates that lie outside the corresponding Bell simplex. The present results therefore provide basis-specific evidence about how spectral PPT conditions can force separability, while the distinction between separability and absolute separability remains explicit.

A natural direction for future work is to extend the present approach to other Bell-basis constructions, including generalized Bell-diagonal families obtained by rotating the standard Weyl--Heisenberg Bell basis by diagonal unitaries~\cite{Popp_NJP_2024}. Other natural settings include the standard ququart Weyl--Heisenberg basis, as well as its extensions to higher dimensions. Applying spectral inequalities, permutation arguments, symmetry reduction, and explicit separable decompositions to these families may yield new spectrum-based separability criteria and help identify broader classes in which \(\PPPT\) implies separability.

\appendix
\section{Proof of Lemma~\ref{lemma:Coset_separability}, Lemma~\ref{lemma:rank_vertices_outerpolytope}, Proposition~\ref{prop:Perm_PPT}, and Lemma~\ref{thm:vertices}}
\label{app:lemma1}
For the reader's convenience, we recall the statement of each lemma and proposition used in the main text.\\

\noindent
\textbf{Lemma 1} (Lagrangian-coset separability). \textit{Let \(\mathcal L\) be a Lagrangian subspace for either the qutrit Weyl system or the binary two-qubit Pauli system. Thus, \(|\mathcal L|=d,\) where \(d=3\) for the qutrit system and \(d=4\) for the binary two-qubit system. For every affine coset \(L=t+\mathcal L\), the Bell-diagonal state
\[
\sigma_L=\frac{1}{d}\sum_{x\in L}P_x
\]
is separable.}
\begin{proof}
The qutrit line-state construction and the four-point lattice construction are standard in the magic-simplex and lattice-state literature~\cite{Baumgartner2007,Popp2022,BenattiKarbalaii2013}; for completeness, we give a uniform self-contained proof covering both settings by directly applying group theory. Let \(U_x\) denote the Weyl or Pauli operator associated with the label \(x\), so that \(P_x=(I\otimes U_x)P_0(I\otimes U_x^\dagger),\;P_0=\ket{\Omega_d}\!\bra{\Omega_d},\) where \(\ket{\Omega_d} = \frac{1}{\sqrt d}\sum_{r=1}^{d}\ket{r}\otimes\ket{r}\) is the standard maximally entangled state.

By the defining property of a Lagrangian subspace, the symplectic product vanishes between every pair of its elements:
\[[h,h']=0\qquad\text{for all }h,h'\in \mathcal L.\]
The Weyl--Pauli commutation relation therefore implies \[ U_hU_{h'}=U_{h'}U_h, \qquad h,h'\in \mathcal L.\] Thus, \(\mathcal{U}_{\mathcal L}=\{U_h:h\in \mathcal L\}\) is a commuting family of unitary operators and hence possesses a common orthonormal eigenbasis \( \{\ket{e_j}\}_{j=1}^{d}. \) We may consequently write
\[
U_h\ket{e_j}=\chi_j(h)\ket{e_j}.
\]
Under addition, \(\mathcal L\) is a finite abelian group of order \(d\). After choosing compatible phases for the operators \(U_h\), their restriction to \(\mathcal L\) satisfies \[ U_{h+h'}=U_hU_{h'}.\] It follows that \[ \chi_j(h+h')=\chi_j(h)\chi_j(h'), \]
so \(h\mapsto\chi_j(h)\) is the character of $j$-th one-dimensional irreducible representation of the abelian group \(\mathcal L\). Moreover, the Lagrangian subgroup gives \(d\) distinct joint characters \(\chi_1,\ldots,\chi_d\). Now, define the dephasing channel
\[
\mathcal{D}_{\mathcal L}(A):=\frac{1}{d}\sum_{h\in \mathcal L}U_hAU_h^\dagger,
\]
for some bounded operator $A$. Expanding \(A\) in the common eigenbasis as \(A=\sum_{j,k=1}^{d}A_{jk}\ket{e_j}\!\bra{e_k},\) we obtain
\[ U_hAU_h^\dagger = \sum_{j,k=1}^{d} A_{jk}\chi_j(h)\chi_k(h)^* \ket{e_j}\!\bra{e_k}.\]
Therefore,
\[\mathcal{D}_H(A) = \sum_{j,k=1}^{d} A_{jk} \left[ \frac{1}{d}\sum_{h\in H} \chi_j(h)\chi_k(h)^*\right]\ket{e_j}\!\bra{e_k}.\]
Orthogonality of the characters of the finite abelian group \(\mathcal L\) gives
\[\frac{1}{d}\sum_{h\in \mathcal L} \chi_j(h)\chi_k(h)^* = \delta_{jk}. \]
Hence
\[
\mathcal{D}_{\mathcal L}(A)
=
\sum_{j=1}^{d}
\ket{e_j}\!\bra{e_j}
A
\ket{e_j}\!\bra{e_j}.
\]
Thus, \(\mathcal{D}_{\mathcal L}\) is the complete-dephasing channel in the common eigenbasis \(\{\ket{e_j}\}\). Now, we first consider the Bell-diagonal state associated with \(\mathcal L\):
\[
\begin{aligned}
\sigma_{\mathcal L}
&=
\frac{1}{d}\sum_{h\in \mathcal L}P_h\\
&=
\frac{1}{d}\sum_{h\in \mathcal L}
(I\otimes U_h)P_0(I\otimes U_h^\dagger)\\
&=
(I\otimes\mathcal{D}_{\mathcal L})(P_0).
\end{aligned}
\]
The maximally entangled state has the basis-independent representation
\(\ket{\Omega_d}=\frac{1}{\sqrt d}\sum_{j=1}^{d}\ket{e_j^*}\otimes\ket{e_j},\)
where complex conjugation is taken in the reference computational basis. Applying \(\mathcal{D}_{\mathcal L}\) to the second subsystem of \(P_0=\ketbra{\Omega_d}{\Omega_d}\) removes all terms with \(j\neq k\), and therefore
\[
\sigma_{\mathcal L}
=
\frac{1}{d}\sum_{j=1}^{d}
\ket{e_j^*}\!\bra{e_j^*}
\otimes
\ket{e_j}\!\bra{e_j}.
\]
Hence \(\sigma_{\mathcal L}\) is separable.

Finally, consider an arbitrary affine coset \(L=t+\mathcal L\). Both the qutrit Weyl operators and the two-qubit Pauli operators obey the projective label-addition rule \( U_{t+h}=e^{i\vartheta(t,h)}U_tU_h. \)
The phase \(e^{i\vartheta(t,h)}\) disappears at the projector level, so \( P_{t+h} = (I\otimes U_t)P_h(I\otimes U_t^\dagger).\) It follows that
\[
\begin{aligned}
\sigma_{t+\mathcal L}
=
\frac{1}{d}\sum_{h\in \mathcal L}P_{t+h}
=
(I\otimes U_t)\sigma_{\mathcal L}(I\otimes U_t^\dagger).
\end{aligned}
\]
Thus, \(\sigma_{t+\mathcal L}\) is locally unitarily equivalent to \(\sigma_{\mathcal L}\). Since local unitary transformations preserve separability, \(\sigma_{t+{\mathcal L}}\) is also separable.
\end{proof}

\noindent
\textbf{Lemma 3} (Rank characterization of vertices). \textit{Let $\lambda = \sum_{k=1}^9 \alpha_k u_k \in Q_9^\downarrow$ with $\alpha_k \ge 0$ and $\sum_{k=1}^9 \alpha_k = 1$. Defining the barycentric rank of $\lambda$, denoted by $\operatorname{rank}(\lambda)$, as the number of non-zero coefficients in its barycentric decomposition, the following conditions hold:
    \begin{enumerate}
    \item If $h_9(\lambda) > 0$ and $\operatorname{rank}(\lambda) \ge 2$, then $\lambda$ is not a vertex of $Q_9^\downarrow$.
    \item If $h_9(\lambda) = 0$ and $\operatorname{rank}(\lambda) \ge 3$, then $\lambda$ is not a vertex of $Q_9^\downarrow$.
    \end{enumerate}}
\begin{proof}
    Let $\lambda = \sum_{k=1}^9 \alpha_k u_k = \alpha \cdot u$, where we adopt the vector notation $\alpha = (\alpha_1, \dots, \alpha_9)^T$ and $u = (u_1, \dots, u_9)^T$. We now establish the two conditions in turn.
    
    (1.) Suppose $h_9(\lambda) > 0$ and $\operatorname{rank}(\lambda) \ge 2$. There exist distinct indices $i \ne j$ such that $\alpha_i, \alpha_j > 0$. Consider the perturbation direction $d = e_i - e_j \in \mathbb{R}^9$, where $e_k$ denotes the $k$-th canonical basis vector in the $9$-th dimensional coefficient space. For example, $e_i=(0,...,0,1,0,...,0)$, with the $1$ in the $i$-th position, and similarly for $e_j$. For sufficiently small $\varepsilon > 0$, the perturbed coefficients $\alpha^\pm := \alpha \pm \varepsilon d$ remain non-negative and satisfy $\sum_{k=1}^9 \alpha_k^\pm = 1$. Because $h_9$ is linear, the spectra $\lambda^\pm := \sum_{k=1}^9 \alpha_k^\pm u_k$ evaluate to
    \begin{equation*}
    h_9(\lambda^\pm) = h_9(\lambda) \pm \varepsilon \bigl(h_9(u_i) - h_9(u_j)\bigr).
    \end{equation*} 
    Since $h_9(\lambda) > 0$, choosing $\varepsilon > 0$ sufficiently small guarantees $h_9(\lambda^\pm) > 0$, placing both $\lambda^+$ and $\lambda^-$ strictly within $Q_9^\downarrow$. As $\lambda = \frac{1}{2}(\lambda^+ + \lambda^-)$ with $\lambda^+ \ne \lambda^-$, $\lambda$ is not an extreme point.\\
    
    (2.) Suppose $h_9(\lambda) = 0$ and $\operatorname{rank}(\lambda) \ge 3$. Then there exist three distinct indices $i, j, k$ such that $\alpha_i, \alpha_j, \alpha_k > 0$.  Consider a perturbation vector $d \in \mathbb{R}^9$ supported exclusively on $\{i, j, k\}$ that satisfies the homogeneous linear system \[ d_i+d_j+d_k=0, \quad\text{and} \] \[ h(u_i)d_i+h(u_j)d_j+h(u_k)d_k=0. \] Because this constitutes at most two linearly independent constraints in three unknowns, a non-trivial solution $(d_i, d_j, d_k) \ne 0$ always exists.  Setting all other components to zero, we construct $\alpha^\pm := \alpha \pm \varepsilon d$. For small enough $\varepsilon > 0$, $\alpha^\pm \ge 0$, while normalization and $h_9(\lambda^\pm) = 0$ are preserved by construction. Thus, $\lambda^\pm \in Q_9^\downarrow$ are distinct feasible points satisfying $\lambda = \frac{1}{2}(\lambda^+ + \lambda^-)$, demonstrating that $\lambda$ cannot be a vertex.
\end{proof}

\noindent
\textbf{Proposition 3}. \textit{ A two-ququart lattice Bell-diagonal state is \(\PPPT\) if and only if its Bell coefficients, arranged in decreasing order $c_{(1)}\geq\cdots\geq c_{(16)}$, satisfy $\sum_{i=1}^{6}c_{(i)}\leq1/2$.}
\begin{proof}
Recall from Eq.~\eqref{eq:ququart_Bell} that, after identifying
\(A=A_1A_2\) and \(B=B_1B_2\), the lattice Bell basis factorizes into
pairs of two-qubit Bell states:
\[
\ket{B_{\mu\nu}}_{AB} = S\!\left( \ket{\Phi_\mu}_{A_1B_1}\otimes \ket{\Phi_\nu}_{A_2B_2} \right), \;\; \mu,\nu\in\F_2^2,
\]
where \(S\) rearranges \(A_1B_1A_2B_2\) into \(A_1A_2B_1B_2\).
Writing \(P_{\mu\nu}=\ketbra{B_{\mu\nu}}{B_{\mu\nu}}\), a lattice
Bell-diagonal state is
\[
\rho(c)=
\sum_{\mu,\nu\in\F_2^2}c_{\mu\nu}P_{\mu\nu}.
\]
Let \(\Gamma_A=T_{A_1}\otimes I_{B_1}\otimes T_{A_2}\otimes I_{B_2}\) denote partial transposition with respect to the ququart $A$ in the interleaved ordering. The four partially transposed two-qubit Bell projectors possess a common orthonormal eigenbasis. Thus, after a suitable relabeling of the corresponding rank-one eigenprojectors \(\{\Pi_\alpha\}_{\alpha\in\F_2^2}\), one has
\begin{equation}
P_\mu^{\Gamma_{A_i}} = \frac{1}{2} \sum_{\alpha\in\F_2^2} h_{\alpha\mu}\Pi_\alpha,
\;\text{where}\;
h_{\alpha\mu}=1-2\delta_{\alpha\mu}.
\label{eq:Bell-PT-Hadamard}
\end{equation}
Define the rank-one projectors
\[ \Pi_{\alpha\beta}:=S\left(\Pi_\alpha^{A_1B_1}\otimes\Pi_\beta^{A_2B_2}\right)S^\dagger.\]
Using Eq.~\eqref{eq:Bell-PT-Hadamard}, the partial transpose of
\(\rho(c)^{AB}\) becomes
\begin{align}
\rho(c)^{\Gamma_A} = \frac{1}{4} \sum_{\alpha,\beta\in\F_2^2} \left( \sum_{\mu,\nu\in\F_2^2} h_{\alpha\mu}h_{\beta\nu}c_{\mu\nu}
\right)
\Pi_{\alpha\beta}.
\label{eq:lattice-PT-diagonalization}
\end{align}
The sixteen eigenvalues of $\rho(c)^{\Gamma_A}$ are therefore given by
\begin{equation}
\lambda_{\alpha\beta}\!\left(\rho(c)^{\Gamma_A}\right) = \frac{1}{4} \sum_{\mu,\nu\in\F_2^2} h_{\alpha\mu}h_{\beta\nu}c_{\mu\nu}, \;\;\forall \;\alpha,\beta\in\F_2^2.
\label{eq:lattice-PT-eigenvalues-preliminary}
\end{equation}
For each fixed \((\alpha,\beta)\in\F_2^4\), introduce the six-element set
\begin{equation}
X_{\alpha\beta} := \Bigl( \{\alpha\}\times \bigl(\F_2^2\setminus\{\beta\}\bigr) \Bigr) \cup \Bigl( \bigl(\F_2^2\setminus\{\alpha\}\bigr) \times\{\beta\} \Bigr).
\label{eq:lattice-cross-set}
\end{equation}
The product \(h_{\alpha\mu}h_{\beta\nu}\) is negative precisely when
exactly one of the equalities \(\mu=\alpha\) and \(\nu=\beta\) holds. Hence,
\[
h_{\alpha\mu}h_{\beta\nu} =
\begin{cases}
-1, & (\mu,\nu)\in X_{\alpha\beta},\\
+1, & (\mu,\nu)\notin X_{\alpha\beta}.
\end{cases}
\]
It follows that
\begin{align}
\lambda_{\alpha\beta}\!\left(\rho(c)^{\Gamma_A}\right) &=\frac{1}{4}
\left[ \sum_{(\mu,\nu)\notin X_{\alpha\beta}}c_{\mu\nu} - \sum_{(\mu,\nu)\in X_{\alpha\beta}}c_{\mu\nu} \right] \nonumber\\
&= \frac{1}{4} \left[ 1- 2\sum_{(\mu,\nu)\in X_{\alpha\beta}}c_{\mu\nu} \right].
\label{eq:lattice-PT-eigenvalues}
\end{align}
Therefore, $\rho(c)$ is PPT if and only if
\begin{equation*}
\sum_{(\mu,\nu) \in X_{\alpha\beta}} c_{\mu\nu} \le \frac{1}{2} \;\; \text{for all } (\alpha, \beta) \in \mathbb{F}_2^4.
\end{equation*}
Equivalently, \(\rho_c\) is NPT if and only if there exists at least one
\((\alpha,\beta)\) such that
\[
\sum_{(\mu,\nu)\in X_{\alpha\beta}}c_{\mu\nu}
>\frac{1}{2}.
\]
Let $c_{(1)} \ge c_{(2)} \ge \cdots \ge c_{(16)}$ denote the components of $c$ arranged in descending order. Because each set $X_{\alpha\beta}$ consists of exactly six coefficients, an arbitrary permutation $\pi \in S_{16}$ can map the six largest coefficients onto any chosen $X_{\alpha\beta}$. Consequently, we obtain the following necessary and sufficient condition for a $4 \otimes 4$ Bell-diagonal \(\PPPT\) state:
\begin{equation}
\rho(\pi c)\ \text{is PPT for every }\pi\in S_{16}
\;\iff\;
\sum_{i=1}^{6}c_{(i)}\leq\frac{1}{2}.
\label{eq:lattice-permutation-PPT-largest-six}
\end{equation} 
\end{proof}

\noindent
\textbf{Lemma 5} (Vertices of \(\mathcal C\)). \textit{The vertices of \(\mathcal C\) are exactly the distinct coordinate permutations of \(v(z)\), \(z=0,1,2,3,4\), where 
\begin{equation}
 v(z)=\left(a_z, ~b_z^{\times (15-z)},~0^{\times z}\right),
\end{equation}
and
\begin{equation}
\label{eq:ab_app}
 b_z=\frac1{2(10-z)},\qquad
 a_z=\frac{5-z}{2(10-z)}.
\end{equation}}
\begin{proof}
Let $[16] = \{1, 2, \dots, 16\}$, and denote any subset $\bar S \subset [16]$ of cardinality $\vert{}\bar S\vert{} = 6$ as a six-subset. For each such subset, we define the corresponding six-sum by $\sum_{i \in \bar S} c_i$. A six-subset $\bar S$ is termed active on $c$ if it saturates the bound, i.e., $\sum_{i \in \bar S} c_i = 1/2$, and inactive otherwise.
Let us now first note that at least one six-subset must be active at a vertex: otherwise the only equality conditions on the positive support $J(c)=\{i:c_{i}>0\}$ would be normalization, leaving a nontrivial perturbation direction.\\
Set \(b=c_{(6)}\) and
\[
 H=\{i:c_i>b\},\qquad r=|H|\leq5.
\]
Every active six-subset contains every index in \(H\). To see this, suppose an
active \(\bar S\) omitted \(i\in H\). Since \(\bar S\) has six elements and at most five
coefficients exceed \(b\), it contains some \(j\) with \(c_j\leq b<c_i\).
Replacing \(j\) by \(i\) gives a six-sum
\[
 \frac12-c_j+c_i>\frac12,
\]
contradicting feasibility of the vector $c$ to remain inside $\mathcal C$.

If \(r\geq2\), choose distinct \(i,j\in H\) and perturb
\(c_i\mapsto c_i+\varepsilon\), \(c_j\mapsto c_j-\varepsilon\) such that both entries remain inside $H$ even after the perturbation. Every active six-subset contains both entries, so all active equations and normalization remain unchanged. There are finitely many inactive inequalities and each has positive slack, i.e., $\left(\tfrac12-\sum_{i\in \bar S}c_i\right)>0$ corresponding to every inactive six-subset $S$. Therefore, for sufficiently small positive and negative \(\varepsilon\) (more precisely, \(|\varepsilon|\) is less than the smallest positive slack), positivity and all inactive inequalities also remain valid. If we consider that a positive and a negative \(\varepsilon\) yield $c_+$ and $c_-$, respectively, then the point $c$ can be expressed as an equal mixture of $c_\pm$, which contradicts extremality. Hence \(r\leq1\).

Suppose first that \(r=0\). Then \(c_{(1)}=\cdots=c_{(6)}=b\). Every active
six-subset consists of six entries equal to \(b\), so \(6b=1/2\) and
\(b=1/12\). Let \(m\) be the number of entries equal to \(b\). If \(m<12\),
normalization requires positive entries below \(b\) whose total is
\((12-m)/12\). Since each is strictly below \(1/12\), there must be at least
two of them. Adding and subtracting a small \(\varepsilon\) on two such
entries preserves normalization and every active six-sum, again contradicting
extremality. Thus \(m=12\), all other entries vanish, and the point is a
permutation of \(v(4)\).

Now suppose \(r=1\). Write the unique exceptional entry as \(a>b\).
Every active six-subset consists of that entry and five entries equal to \(b\),
hence
\begin{equation}\label{eq:active-ab}
 a+5b=\frac12.
\end{equation}
There can be no additional positive coefficient below \(b\). If there are at
least two, they may be perturbed against each other. If there is exactly one,
say \(c\), and \(m\) entries equal \(b\), the direction
\[
 \delta a=-5\varepsilon,\qquad
 \delta b=\varepsilon\ \text{on all \(m\) entries},\qquad
 \delta c=-(m-5)\varepsilon
\]
preserves both Eq.~\eqref{eq:active-ab} and normalization. For sufficiently small
positive and negative \(\varepsilon\), it also preserves all strict
inequalities and positivity. Thus every positive entry other than \(a\)
equals \(b\). There are \(n-1=15-z\) copies of \(b\). Together with normalization,
Eq.~\eqref{eq:active-ab} becomes
\[
 a+5b=\frac12,\qquad a+(15-z)b=1.
\]
Solving these equations gives Eq.~\eqref{eq:ab_app}. Since \(a>b\), this case has
\(z=0,1,2,3\).

It remains to prove that every listed point is genuinely a vertex. Let \( v(z)=\bigl(a_z,b_z,\ldots,b_z,0,\ldots,0\bigr) \) and suppose that it has a convex decomposition
\[
v(z)=t\,u+(1-t)w,\qquad 0<t<1,
\]
where \(u,w\in\mathcal C\). If \(v(z)_i=0\), then \( 0=t u_i+(1-t)w_i. \)
Since \(u_i,w_i\geq0\), this implies \(u_i=w_i=0\). Thus both \(u\) and \(w\) vanish at every zero coordinate of \(v(z)\). Therefore, although weight may be transferred from a positive coordinate to a zero coordinate in a one-sided feasible perturbation, such a perturbation cannot appear in a convex decomposition of \(v(z)\), because the other component would require a negative coefficient at that position.
Therefore, it remains to exclude the possibility for a nontrivial decomposition where \(u\) and \(w\) differ from \(v(z)\) only at its positive-coordinate positions. For every six-subset \(\bar S\) active at \(v(z)\), \( \sum_{i\in \bar S}v(z)_i=\frac12. \)
Using the convex decomposition,
\[
\frac12
=
t\sum_{i\in \bar S}u_i
+(1-t)\sum_{i\in \bar S}w_i.
\]
Since \(u,w\in\mathcal C\), both six-sums are at most \(1/2\). Therefore, both must equal \(1/2\). Hence \(u\) and \(w\) satisfy all the same active equalities as \(v(z)\).
For \(z\leq3\), subtracting two active equations that differ in only one repeated positive position forces all those coordinates of \(u\) to be equal. Thus
\(u=\left(\alpha,~\beta^{\times(15-z)},~0^{\times z}\right).\)
The active six-subset condition and normalization give
\[
\alpha+5\beta=\frac12,
\qquad
\alpha+(15-z)\beta=1.
\]
These equations have the unique solution \(\alpha=a_z\) and \(\beta=b_z\). Hence \(u=v(z)\), and similarly \(w=v(z)\).
For \(z=4\), the active equations force all twelve positive entries to be equal, and normalization fixes each one to \(1/12\). Thus again \(u=w=v(4)\).
Therefore, \(v(z)\) has no nontrivial convex decomposition and is a vertex of \(\mathcal C\).
\end{proof}

\section{Geometry-preserving symmetry groups}
\label{app:symmetry}
This section describes the geometry-preserving symmetry groups used in the two Bell-diagonal settings. For the two-qutrit Weyl basis, the relevant group is \(\operatorname{AGL}(2,3)\), acting on the affine plane \(\mathbb F_3^2\), whereas for the two-ququart lattice basis it is \(\operatorname{ASp}(4,2)\), acting on the binary symplectic space \(\mathbb F_2^4\). We specify the induced actions on the Bell coefficients, the resulting orbit structures, and why these transformations preserve separability.
\subsection{Affine symmetry in the Weyl--Heisenberg two-qutrit case}
\label{app:3x3_symmetry}
We identify the nine two-qutrit Bell-basis labels with the points of the affine plane \(\operatorname{AG}(2,3)\cong\mathbb{F}_3^2\)~\cite{Baumgartner2006,Baumgartner2007}.
Its affine general linear group is
\[
\operatorname{AGL}(2,3)
=
\mathbb{F}_3^2\rtimes\operatorname{GL}(2,3),
\]
whose elements act as
\begin{equation}
\label{eq:qutrit-affine-map}
g:x\longmapsto Ax+t,
\qquad
A\in\operatorname{GL}(2,3),\quad
t\in\mathbb{F}_3^2.
\end{equation}
The induced action on a coefficient vector \(c\) is given by the pullback
\begin{equation}
\label{eq:qutrit-affine-pullback}
(g\cdot c)_x
=
c_{g^{-1}(x)}
=
c_{A^{-1}(x-t)}.
\end{equation}
which maps the entry initially at position $x$ to the transformed coordinate $g(x)$. Since \( |\operatorname{GL}(2,3)|=48, \)
the affine group has order \( |\operatorname{AGL}(2,3)| = 9\cdot48 = 432. \)
Two coefficient vectors are called affine-equivalent when they belong to the same orbit
\begin{equation}
\label{eq:qutrit-affine-orbit}
\operatorname{Orb}(c) =\left\{g\cdot c:g\in\operatorname{AGL}(2,3)\right\}.
\end{equation}
The orbit--stabilizer theorem gives
\begin{equation}
 |\text{Orb}(c)|=\frac{432}{|\text{Stab}(c)|}, \qquad \text{Stab}(c)=\{g:g\cdot c=c\}.
\label{eq:orbitstabilizer}
\end{equation}

The determinant-one subgroup is the affine symplectic group
\[ \operatorname{ASp}(2,3) = \mathbb{F}_3^2\rtimes\operatorname{Sp}(2,3), \]
where \(\operatorname{Sp}(2,3)=\operatorname{SL}(2,3)\) in dimension two. Hence, in this case, $\operatorname{ASL}(2,3)=\operatorname{ASp}(2,3)$. Since $|\operatorname{Sp}(2,3)| =24$, affine symplectic group has order $|\operatorname{ASp}(2,3)|=9\cdot24=216.$

The action of this subgroup on the Bell projectors is implemented by local Weyl--Clifford unitaries~\cite{Hostens2005} and therefore preserves separability. The remaining elements of \(\operatorname{AGL}(2,3)\) have linear parts of determinant \(-1\). Let \( R:(k,l)\longmapsto(k,-l)\)
be the canonical reflection. If \(A\in\operatorname{GL}(2,3)\) satisfies
\(\det A=-1\), then
\[
S:=AR\in\operatorname{SL}(2,3),
\qquad
A=SR.
\]
Thus every determinant-minus-one transformation is a composition of a symplectic transformation and the canonical reflection \(R\). On the Bell projectors, this reflection is implemented by the subsystem-swap operator \(F\). Specifically, \( FP_{k,l}F^\dagger=P_{k,-l}. \) Although \(F\) is not a local unitary, it preserves the separability structure. Consequently, the entire \(\operatorname{AGL}(2,3)\) action, including its determinant-minus-one coset, preserves separability.

Geometrically, \(\operatorname{AGL}(2,3)\) preserves collinearity and therefore permutes the twelve affine lines of \(\operatorname{AG}(2,3)\). These lines form four parallel classes -- horizontal, vertical, diagonal, and antidiagonal -- with three lines in each class. Every line contains three points, and every two distinct points determine a unique affine line. Hence the affine action preserves precisely the incidence structure used in the qutrit separability construction. Under this action, the \(2725\) generating points in
\(\mathcal{V}_{2725}\) decompose into \(28\) affine orbits. It is important to note that affine inequivalence should not be confused with convex independence: a representative of one affine orbit may still belong to the convex hull of points from other orbits.

\subsection{Affine symplectic symmetry in the two-ququart lattice setting}
\label{sec:affinesymm_4x4}
The lattice analogue of the affine action introduced for the qutrit phase space is obtained by replacing the affine plane \(\mathbb{F}_3^2\) with the binary symplectic space \( V_4=\mathbb{F}_2^4. \) The relevant group is not the full affine general linear group, but the affine symplectic group
\begin{equation}
\operatorname{ASp}(4,2)
=
V_4\rtimes\operatorname{Sp}(4,2).
\end{equation}
Its elements act on lattice labels as
\begin{equation}
\label{eq:lattice-affine-action}
g:x\longmapsto Sx+t,
\qquad
S\in\operatorname{Sp}(4,2),\quad
t\in V_4,
\end{equation}
where \(S\) preserves the binary symplectic product:
\[
[Sx,Sy]=[x,y]
\qquad
\text{for all }x,y\in V_4.
\]

Using the same pullback convention as in the qutrit case, the induced action on a coefficient vector is
\begin{equation}
\label{eq:lattice-coefficient-action}
(g\cdot c)_x
=
c_{g^{-1}(x)}
=
c_{S^{-1}(x-t)}
=
c_{S^{-1}(x+t)},
\end{equation}
where the last equality follows because addition and subtraction coincide over \(\mathbb{F}_2\). 

The order of the linear symplectic group is \( |\operatorname{Sp}(4,2)|= 720,\) and therefore $|\operatorname{ASp}(4,2)|= 16\cdot720 = 11520$. Thus, with the orbit and stabilizer defined as in the qutrit discussion,
\[
|\operatorname{Orb}(c)|
=
\frac{11520}
{|\operatorname{Stab}(c)|}.
\]

The difference from the qutrit case lies in the allowed linear transformations. For the two-dimensional qutrit label space, every
\(A\in\operatorname{GL}(2,3)\) satisfies
\[
A^{T}J_2A=(\det A)J_2,
\]
where $J_{2n}=\bigoplus_{i=1}^n\begin{pmatrix}0&1\\-1&0\end{pmatrix}$ is the $2n\times2n$ matrix of the symplectic form, so that $[x,y]=x^TJ_{2n}y$. Since the only nonzero elements of \(\mathbb F_3\) are \(1\) and \(-1\), the determinant-one and determinant-minus-one sectors are respectively symplectic and anti-symplectic. This is why the full group $\AGL(2,3)$ preserves the structure of the qutrit affine Lagrangian lines, i.e. the cosets of Lagrangian subspaces. In contrast, for the four-dimensional binary space, the condition
\(\det S=1\) does not imply
\[
S^{T}J_4S=J_4.
\]
Indeed, every invertible matrix over \(\mathbb{F}_2\) has determinant one, whereas only the subgroup
\(\operatorname{Sp}(4,2)\subsetneq\operatorname{GL}(4,2)\) preserves the symplectic product. Moreover, because \(-J=J\) over \(\mathbb{F}_2\), there is no distinct anti-symplectic coset analogous to the determinant-minus-one part of the qutrit group.

The affine symplectic action preserves the lattice geometry relevant to separability. If \( L=x_0+H \) is an affine Lagrangian plane, then \( g(L)=Sx_0+t+S(H) \)
is again an affine Lagrangian plane, because \(S(H)\) is Lagrangian whenever \(H\) is Lagrangian. Consequently, the state associated to the affine Lagrangian plane $L$ transforms as
\begin{equation}
\label{eq:lattice-plane-action}
\sigma_L = \frac14\sum_{x\in L}P_x \quad\longmapsto\quad \sigma_{g(L)}.
\end{equation}

This action is implemented by local Clifford unitaries. For every
\(S\in\operatorname{Sp}(4,2)\), there exists a two-qubit Clifford unitary \(C_S\) satisfying~\cite{Dehaene2003,Hostens2005}
\[
C_SQ_xC_S^\dagger
=
e^{i\theta_S(x)}Q_{Sx}.
\]
For \(g(x)=Sx+t\), define
\[
W_g=C_S^*\otimes Q_tC_S,
\]
where the two tensor factors act on the two bulk parties. Then
\begin{align}
W_g\ket{\Omega_x}
&=
\left(C_S^*\otimes Q_tC_S\right)
\left(I_4\otimes Q_x\right)\ket{\Omega_4}
\nonumber\\
&=
e^{i\phi_g(x)}
\ket{\Omega_{Sx+t}},
\end{align}
and hence
\begin{equation}
\label{eq:lattice-projector-action}
W_gP_xW_g^\dagger=P_{g(x)},
\qquad
W_g\rho(c)W_g^\dagger=\rho(g\cdot c).
\end{equation}
Since \(W_g\) is a product unitary with respect to the bipartition, it preserves separability.

Finally, two different symmetry levels should be distinguished. The \(\PPPT\) polytope \(\mathcal C\), being defined solely through the six largest coefficients, is invariant under the full coordinate-permutation group \(S_{16}\). The subgroup
\(\operatorname{ASp}(4,2)\subset S_{16}\), however, is the relevant geometric symmetry because it additionally preserves the symplectic product, the affine Lagrangian planes, and the associated separable plane states. A general element of \(\operatorname{AGL}(4,2)\) need not preserve this Lagrangian structure.

\begin{remark}[The two ququart Bell bases]
\label{rem:ququart-bases}
The standard ququart Weyl basis is labeled by \(\mathbb Z_4^2\) and generated by the operators \(X_4^kZ_4^l\), whereas the lattice Bell basis is labeled by \(\mathbb F_2^4\) and generated by two-qubit Pauli operators. These two maximally entangled bases are not locally unitarily equivalent.

Indeed, suppose that a product unitary \(A\otimes B\) mapped the Weyl basis to the lattice basis, up to phases and a permutation \(\pi\in S_{16}\). Then, for every pair of Weyl labels \(u,v\), the corresponding operator quotients would satisfy
\[
B\bigl(W_uW_v^\dagger\bigr)B^\dagger \propto Q_{\pi(u)}Q_{\pi(v)}^\dagger.
\]
Unitary conjugation preserves projective order\footnote{Projective order of a unitary $U$ is the minimal integer $r$ such that $U^r=e^{\iota \theta}I$ for some phase $e^{\iota \theta}$}. However, some nonidentity Weyl quotients, such as \(X_4\), have projective order four, whereas every nonidentity quotient of two lattice Pauli operators has
projective order two. This contradiction proves that no product unitary maps one complete Bell basis to the other, even after allowing phases and permutations. A global basis-changing unitary exists, but it does not preserve the local geometry used in our analysis.
\end{remark}

\section{The fifteen ordered vertices of the polytope $Q_9^\downarrow$}
\label{app:ordered_vertices}
Below are the fifteen ordered vertices of the outer polytope $Q_9^\downarrow$, listed in sequence.
\begin{align}
V_1&=(1/9)^{\times9},\nonumber\\
V_2&=(1/8)^{\times8},0,\nonumber\\
V_3&=(1/7)^{\times7},0,0,\nonumber\\
V_4&=(1/7)^{\times6},(1/21)^{\times3},\nonumber\\
V_5&=(1/7)^{\times6},(1/14)^{\times2},0,\nonumber\\
V_6&=(3/19)^{\times5},(1/19)^{\times4},\nonumber\\
V_7&=(2/13)^{\times5},(1/13)^{\times3},0,\nonumber\\
V_8&=(3/17)^{\times4},(1/17)^{\times5},\nonumber\\
V_9&=(1/6)^{\times4},(1/12)^{\times4},0,\nonumber\\
V_{10}&=(1/5)^{\times3},(1/15)^{\times6},\nonumber\\
V_{11}&=(2/11)^{\times3},(1/11)^{\times5},0,\nonumber\\
V_{12}&=(3/13)^{\times2},(1/13)^{\times7},\nonumber\\
V_{13}&=(1/5)^{\times2},(1/10)^{\times6},0,\nonumber\\
V_{14}&=3/11,(1/11)^{\times8},\nonumber\\
V_{15}&=2/9,(1/9)^{\times7},0.
\label{eq:15vertices}
\end{align}
\section{Representative counting arguments under $\AGL(2,3)$ symmetry}
\label{app:rep_count_3x3}
Following our earlier convention, we label the distinct entries within a given spectral profile by $H$, $M$, $L$, and $Z$, denoting high, intermediate, low, and zero values (when present), respectively. Any configuration is uniquely specified by the phase-space coordinates supporting the non-generic values; the remaining positions are then occupied automatically. For brevity, we focus our analysis on the non-trivial spectral types where orbit splitting is not immediate.
\paragraph{Type 3.}
For $H^{\times7}Z^{\times2}$, choosing the two zero positions completely fixes the state. Any two zeros determine a unique affine line, whose third point is necessarily an $H$. Thus the configurations may be viewed as lines with coefficient pattern $HZZ$. A line with coefficient pattern $HZZ$ means only that the three points on that line carry one $H$ and two $Z$'s; their placement along the line is irrelevant. There are twelve geometric lines and three choices for which point of the line carries $H$, so
\begin{equation}
 12\times3=36.
\end{equation}
All such configurations are affine-equivalent, hence type 3 is one orbit of size 36.

\paragraph{Type 4.}
For $H^{\times6}L^{\times3}$, the three identical $L$ positions determine the configuration. There are $\binom93=84$ triples. Exactly twelve are affine lines; the remaining $84-12=72$ are noncollinear. Since $\AGL(2,3)$ is transitive on lines and on noncollinear triples,
\begin{equation}
 84=12+72.
\end{equation}
Here the line count is twelve, rather than $9\times4=36$, because $9\times4$ counts each line passing through all $L$ three times, once from each of its identical $L$ points.

\paragraph{Type 5.}
For $H^{\times6}M^{\times2}Z$, the three marked points have coefficient patterns $ZMM$. In the collinear case, choose $Z$ in nine ways and then one of the four lines through it; the two remaining points of that line are forced to be the identical $M$'s. Hence
\begin{equation}
 N_{\mathrm{coll}}=9\times4=36.
\end{equation}
Equivalently, one may choose $Z$ in nine ways and one of the eight positions for the first $M$, after which the second $M$ is fixed; since two $M$s are indistinguishable, division by $2!$ gives $9\times8/2=36$. The remaining $252-36=216$ configurations are noncollinear.

\paragraph{Type 6.}
The four distinguished $L$ points either contain a line or contain no collinear triple. In the first case choose the line in 12 ways and the fourth point outside it in 6 ways, giving
\begin{equation}
 12\times6=72.
\end{equation}
The remaining $\binom94-72=54$ four-point subsets are four-caps (four-point subset containing no three collinear points).

\paragraph{Type 7.}
Write the marked points as $L_1,L_2,L_3,Z$. There are $9\binom83=504$ placements. Three affine cases occur. First, $L_1,L_2,L_3$ themselves form a line: choose that line in 12 ways and choose $Z$ outside it in 6 ways, giving 72. Second, the $L$-triple is noncollinear and $Z$ lies on one of its three secants. There are $84-12=72$ noncollinear $L$-triples, and
each has three such positions for $Z$, giving $72\times3=216$. Third, no three of the four marked points are collinear; for each noncollinear $L$-triple, the other three available positions (i.e., positions other than its three secants) of $Z$ give $72\times3=216$. Thus
\begin{equation}
 504=72+216+216.
\end{equation}

\paragraph{Type 9.}
For $H^{\times4}L^{\times4}Z$, it is convenient to mark only $H_1,H_2,H_3,H_4,Z$; the four $L$ positions are then fixed. A marked line can only have pattern $HHH$ or $ZHH$. Four incidence classes occur:
\begin{align}
&\text{exactly one }HHH\text{ line}: &&144,\\
&\text{exactly one }ZHH\text{ line}: &&216,\\
&\text{one }HHH\text{ and one }ZHH\text{ line}: &&216,\\
&\text{two }ZHH\text{ lines}: &&54.
\end{align}
For example, the first number follows by choosing an $HHH$ line (12 choices), choosing the fourth $H$ outside it (6 choices), and then choosing one of the two remaining positions for $Z$ that creates no additional marked line:
\begin{equation}
 12\times6\times2=144.
\end{equation}
For the last class, fix $Z$. The eight remaining points are partitioned into four pairs by the four lines through $Z$. Choosing two of those pairs to be occupied by the four $H$ points gives
\begin{equation}
 9\binom42=54.
\end{equation}
The four multiplicities sum to
\begin{equation}
 144+216+216+54=630.
\end{equation}

The remaining rows follow from the same incidence logic, often by exchanging a coefficient class with its complement. No additional affine classes occur.

\section{The 28 affine-orbit representatives under $\AGL(2,3)$ symmetry}
\label{app:orbit_representative_3x3}

Below, we list the $28$ orbit representatives used to construct the exact separability certificate, ordered according to the phase-space coordinates $(00, 01, 02, 10, 11, 12, 20, 21, 22)$:
\begin{align*}
T_1&=(1/9,1/9,1/9,1/9,1/9,1/9,1/9,1/9,1/9),\\
T_2&=(0,1/8,1/8,1/8,1/8,1/8,1/8,1/8,1/8),\\
T_3&=(0,0,1/7,1/7,1/7,1/7,1/7,1/7,1/7),\\
T_4&=(1/21,1/21,1/21,1/7,1/7,1/7,1/7,1/7,1/7),\\
T_5&=(1/21,1/21,1/7,1/21,1/7,1/7,1/7,1/7,1/7),\\
T_6&=(0,1/14,1/14,1/7,1/7,1/7,1/7,1/7,1/7),\\
T_7&=(0,1/14,1/7,1/14,1/7,1/7,1/7,1/7,1/7),\\
T_8&=(1/19,1/19,1/19,1/19,3/19,3/19,3/19,3/19,3/19),\\
T_9&=(1/19,1/19,3/19,1/19,1/19,3/19,3/19,3/19,3/19),\\
T_{10}&=(0,1/13,1/13,1/13,2/13,2/13,2/13,2/13,2/13),\\
T_{11}&=(0,1/13,2/13,1/13,1/13,2/13,2/13,2/13,2/13),\\
T_{12}&=(0,1/13,2/13,1/13,2/13,2/13,2/13,2/13,1/13),\\
T_{13}&=(1/17,1/17,1/17,1/17,1/17,3/17,3/17,3/17,3/17),\\
T_{14}&=(1/17,1/17,1/17,1/17,3/17,3/17,1/17,3/17,3/17),\\
T_{15}&=(0,1/12,1/12,1/12,1/12,1/6,1/6,1/6,1/6),\\
T_{16}&=(0,1/12,1/12,1/12,1/6,1/6,1/12,1/6,1/6),\\
T_{17}&=(0,1/12,1/12,1/12,1/6,1/6,1/6,1/12,1/6),\\
T_{18}&=(0,1/12,1/6,1/12,1/12,1/12,1/6,1/6,1/6),\\
T_{19}&=(1/15,1/15,1/15,1/15,1/15,1/15,1/5,1/5,1/5),\\
T_{20}&=(1/15,1/15,1/15,1/15,1/15,1/5,1/15,1/5,1/5),\\
T_{21}&=(0,1/11,1/11,1/11,1/11,1/11,2/11,2/11,2/11),\\
T_{22}&=(0,1/11,1/11,1/11,1/11,2/11,1/11,2/11,2/11),\\
T_{23}&=(0,1/11,1/11,1/11,1/11,2/11,2/11,1/11,2/11),\\
T_{24}&=(1/13,1/13,1/13,1/13,1/13,1/13,1/13,3/13,3/13),\\
T_{25}&=(0,1/10,1/10,1/10,1/10,1/10,1/10,1/5,1/5),\\
T_{26}&=(0,1/10,1/10,1/10,1/10,1/5,1/10,1/5,1/10),\\
T_{27}&=(1/11,1/11,1/11,1/11,1/11,1/11,1/11,1/11,3/11),\\
T_{28}&=(0,1/9,1/9,1/9,1/9,1/9,1/9,1/9,2/9).
\end{align*}

\section{Derivation of the thirteen lattice orbits under $\ASp(4,2)$ symmetry}
\label{app:lattice-orbits_4x4}

For \(z\leq3\), a lattice vertex is determined by a distinguished position
\(d\in\mathbb F_2^4\) carrying \(a_z\) and an unordered set \(Z\subset\mathbb F_2^4\setminus\{d\}\) of $z$ zero positions, with \(d\notin Z\).
For \(g(x)=Sx+t\), choose \(t=Sd\).  Because subtraction equals addition in
\(\F_2\), \(g(d)=0\) and \(g(x)=S(x+d)\).
Thus the classification reduces to the action of \(\Sp(4,2)\) on the
translated zero set \(Z+d\).  Symplectic products of differences are
invariant:
\begin{equation}
 [g(x)+g(d),g(y)+g(d)]=[x+d,y+d].
\end{equation}
Witt's extension lemma ensures that the isometries between the small
subspaces used below extend to global symplectic transformations.

\subsection*{The cases \(z=0\) and \(z=1\)}

For \(z=0\), translations are transitive on the distinguished point, giving
one orbit of size \(16\).  For \(z=1\), after moving \(d\) to zero, the unique zero coefficient is located at some nonzero label \(u\in\mathbb F_2^4\).  The group \(\Sp(4,2)\) is transitive on the
fifteen nonzero vectors, giving one orbit of size \(16\cdot15=240\).

\subsection*{The case \(z=2\)}

Let us consider the translated zeros as \(\{u,v\}\).  The invariant \([u,v]\) gives
the two orbits
\begin{equation}\label{eq:z2-reps}
 [u,v]=1:\ \{\hex{1},\hex{2}\},
 \qquad
 [u,v]=0:\ \{\hex{1},\hex{4}\}.
\end{equation}
For each of the fifteen non-zero vectors $u \in \mathbb{F}_2^4$, there exist exactly eight non-zero vectors whose symplectic inner product with $u$ equals $1$, and six distinct non-zero vectors whose symplectic inner product with $u$ vanishes. Consequently, there are $\frac{15 \cdot 8}{2} = 60$ non-commuting pairs and $\frac{15 \cdot 6}{2} = 45$ commuting pairs, where the factor of $1/2$ accounts for the symmetry of unordered pairs (since both $u$ and $v$ support vanishing coefficients, their permutation yields the same vertex).  Restoring \(d\) gives orbit sizes \(16\cdot60=960\) and \(16\cdot45=720\).

\subsection*{The case \(z=3\)}

Let the translated zero set be \(\{u,v,w\}\).  Let
\begin{eqnarray}
\label{eq:z3-invariants}
 r&=&\dim(\operatorname{span}\{u,v,w\}),\;\;\text{and}\\
 m&=&\#\{\{x,y\}\subset\{u,v,w\}:[x,y]=1\}.
\end{eqnarray}
If \(r=2\), the set consists of all three nonzero vectors of a two-dimensional subspace.  That subspace is either nondegenerate (\(m=3\)) or Lagrangian (\(m=0\)), represented by
\begin{equation}\label{eq:z3-rank2}
 \{\hex{1},\hex{2},\hex{3}\},
 \qquad
 \{\hex{1},\hex{4},\hex{5}\}.
\end{equation}
There are twenty nondegenerate and fifteen Lagrangian two-dimensional subspaces, so the orbit sizes after restoring \(d\) are \(320\) and \(240\).

If \(r=3\), the three vectors are linearly independent. The case \(m=0\) is impossible, since it would make their span a three-dimensional totally isotropic subspace (Lagrangian space), whereas every totally isotropic subspace of a nondegenerate four-dimensional symplectic space has dimension at most two. Up to reordering \(u,v,w\), the three possible pairing patterns are distinguished by whether exactly one, two, or all three of the symplectic products \([u,v]\), \([u,w]\), and \([v,w]\) equal \(1\). These cases correspond to \(m=1,2,3\) and are represented, respectively, by
\begin{equation}\label{eq:z3-rank3}
 \{\hex{1},\hex{2},\hex{4}\},\quad
 \{\hex{1},\hex{2},\hex{5}\},\quad
 \{\hex{1},\hex{2},\hex{7}\}.
\end{equation}
The triples in each case can be counted as follows. For \(m=1\), choose the unique pair with symplectic product \(1\), giving \(60\) choices. There are then \(3\) choices for a third vector that is orthogonal to both and makes the triple linearly independent. This gives \(60\cdot3=180\) triples. For \(m=2\), choose the unique pair with symplectic product \(0\), giving \(45\) choices. There are \(4\) choices for a third vector whose symplectic product with each of the first two vectors equals \(1\), giving \(45\cdot4=180\) triples. For \(m=3\), choose a pair with symplectic product \(1\), then choose one of the \(3\) third vectors that make the triple linearly independent and have symplectic product \(1\) with each of the first two vectors. Each triple is counted three times, since any of its three pairs can be chosen as the initial pair, so this gives \(60\cdot3/3=60\) triples. Including the \(16\) possible distinguished labels therefore gives affine symplectic orbit sizes \(2880,2880,\) and \(960\), respectively.

\subsection*{The case \(z=4\)}

For \(z=4\), we have \(a_4=b_4=1/12\), so there is no distinguished label. Each vertex is therefore determined solely by the unordered set \(Z={x_1,x_2,x_3,x_4}\) of labels at which its coefficients vanish.  Choose \(x_1\) as an origin and define
\begin{equation}
\label{eq:direction-space}
 W=\operatorname{span}\{0,x_2+x_1=u,x_3+x_1=v,x_4+x_1=w\}.
\end{equation}
Translations do not affect these differences. The affine dimension is two or three.

If \(\dim W=2\), then \(Z=x_1+W\) is a two-dimensional affine plane whose direction space \(W\) is either nondegenerate or Lagrangian.  There are twenty nondegenerate and fifteen Lagrangian two-dimensional subspaces, each with four distinct cosets. This gives
orbits of sizes \(80\) and \(60\), represented by
\begin{equation}\label{eq:z4-plane-reps}
 \{\hex{0},\hex{1},\hex{2},\hex{3}\},
 \qquad
 \{\hex{0},\hex{1},\hex{4},\hex{5}\}.
\end{equation}

Suppose now that \(\dim W=3\). The restriction of the alternating bilinear form \([\cdot,\cdot]\) to \(W\) has a one-dimensional radical,
\begin{equation}
\label{eq:radW}
R=\rad(W)=W\cap W^\perp=\{0,r\}.
\end{equation}
Define \( \Delta=u+v+w=x_1+x_2+x_3+x_4.\) This vector is invariant under translations and depends only on the unordered set \(Z\), independently of its ordering or the choice of reference point (here it is \(x_1\)). Under a symplectic transformation \(S\), both \(\Delta\) and \(r\) transform to \(S\Delta\) and \(Sr\), respectively. Thus the two cases \(\Delta=r\) and \(\Delta\ne r\) are preserved by affine symplectic transformations. Although a pairwise sum such as \(u+v\) also transforms covariantly, it selects a particular pair, so the condition \(u+v=r\) depends on the ordering of the same set. Here ``ordering'' means the arbitrary labeling of the three direction vectors as \(u,v,w\). Relabeling them, equivalently, changing their order, may replace \(u+v\) by \(u+w\) or \(v+w\), so the condition \(u+v=r\) may no longer hold even though the underlying unordered set \(Z\) is unchanged. The geometric distinction between the two cases follows from the six pairwise differences of the points in \(Z\): \( u,\ v,\ w,\ u+v,\ u+w,\ v+w. \) Since \(u,v,w\) are linearly independent, these are precisely the six nonzero vectors of \(W\) other than \(\Delta\). Consequently, the radical vector occurs as a pairwise difference exactly when \(\Delta\neq r\).

If \(\Delta\ne r\), there exist distinct points \(x,y\in Z\) such that \(x+y=r\). Applying the translation \(z\mapsto z+x\) to every point of \(Z\) sends \(x\) to \(0\) and \(y\) to \(r\). Denoting the other two translated points by \(a,b\), the set becomes \(\{0,r,a,b\}\). Since \(r\) lies in the radical, \([r,a]=[r,b]=0\). If \([a,b]=0\) as well, the form would vanish on \(W=\operatorname{span}\{r,a,b\}\), contradicting its one-dimensional radical. Hence \([a,b]=1\). The set $\{0,r,a,b\}$ with $[a,b]=1$ is equivalent, under the affine symplectic group, to $\{\hex{0},\hex{1},\hex{2},\hex{4}\}$.  If \(\Delta=r\),
the set is equivalent to \(\{0,a,b,r+a+b\}\), represented by
\(\{\hex{0},\hex{1},\hex{2},\hex{7}\}\). Since each three-dimensional subspace $W$ has the form $W=r^\perp$ with $r\neq0$, and there are fifteen nonzero vectors in $\mathbb F_2^4$, there are exactly fifteen such three-dimensional subspaces. Each three-dimensional subspace has two affine cosets. Fix a three-dimensional direction space \(W\) and one of its affine cosets \(C=t+W\). For any prescribed nonzero \(\delta\in W\), a four-point subset \(Z\subset C\) of affine dimension three satisfying \(\Delta=\delta\) can be written as
\[
Z=\{x_1,x_1+u,x_1+v,x_1+w\},
\;\; w=\delta+u+v.
\]
There are \(8\) choices for \(x_1\) (since \(x_1\) is an element of coset $C$, where $C$ has dimension three, i.e., it has eight elements), \(6\) choices for \(u\notin\{0,\delta\}\), and \(4\) choices for \(v\notin\operatorname{span}\{u,\delta\}\). Dividing the resulting \(8\cdot6\cdot4\) ordered descriptions by \(4!\) gives \(8\) unordered subsets for each fixed \(\delta\neq0\). Since \(W\) has seven nonzero vectors, exactly one of which is its radical vector \(r\), each coset contains \(8\) subsets with \(\Delta=r\) and \(6\cdot8=48\) subsets with \(\Delta\neq r\). Hence, their orbit sizes are \(15\cdot2\cdot48=1440\) and \(15\cdot2\cdot8=240\).

\section{Computational algorithm for systematically extracting the affine orbits}
\label{app:numerical_algorithm}
\textit{\textbf{Two-qutrit case.}} All affine orbits, along with their multiplicities and canonical representatives, can be determined numerically via a straightforward exhaustive orbit-stripping routine. Let $\{V_r\}_{r=1}^{15}$ denote the extreme points of $Q_9^\downarrow$. For each generator $V_r$, construct the set $S_r = \{ P_\pi V_r : \pi \in S_9 \}$ of all its distinct coordinate permutations. Initializing a remaining-vector set $R \gets S_r$, we select an arbitrary element $c \in R$ and generate its complete affine orbit,
\begin{equation*}
\operatorname{Orb}(c) = { g \cdot c : g \in \mathrm{AGL}(2,3) }.
\end{equation*}
We record $c$ as an orbit representative and its cardinality $\vert{}\operatorname{Orb}(c)\vert{}$ as the corresponding multiplicity. The remaining-vector set is then updated via $R \gets R \setminus \operatorname{Orb}(c)$, and this whole procedure is repeated until $R = \emptyset$. If this process terminates after $k_r$ iterations, then $V_r$ decomposes into exactly $k_r$ inequivalent $\mathrm{AGL}(2,3)$ orbits. Iterating over all $r \in \{1, \dots, 15\}$ yields the complete orbit decomposition across $\mathcal{V}_{2725}$, as summarized in Algorithm~1.\\
\noindent
\rule{\columnwidth}{0.5pt}
\textbf{Algorithm 1.} Decomposition of the fifteen ordered spectral types $\{V_r\}_{r=1}^{15}$ into $28$ inequivalent $\AGL(2,3)$ orbits.
\medskip
\begin{algorithmic}[1]
\Require Ordered vertices $\{V_r\}_{r=1}^{15}$ of $Q_9^\downarrow$
\Ensure Affine-orbit representatives and their multiplicities
\State $\mathcal{R}_{\mathrm{all}}\gets\emptyset$
\For{$r=1,\ldots,15$}
    \State $S_r\gets
    \{\text{all distinct permutations of }V_r\}$
    \State $R\gets S_r$
    \State $\mathcal{R}_r\gets\emptyset$
    \While{$R\neq\emptyset$}
        \State Choose any $c\in R$
        \State $\mathcal{O}(c)\gets
        \{g\cdot c:\ g\in\AGL(2,3)\}$
        \State $m(c)\gets|\mathcal{O}(c)|$
        \State Record $(c,m(c))$ in $\mathcal{R}_r$
        \State $R\gets R\setminus\mathcal{O}(c)$
    \EndWhile
    \State $\mathcal{R}_{\mathrm{all}}
    \gets
    \mathcal{R}_{\mathrm{all}}\cup\mathcal{R}_r$
\EndFor
\State \Return $\mathcal{R}_{\mathrm{all}}$
\end{algorithmic}
\noindent
\rule{\columnwidth}{0.5pt}
We remark that the representative $c$ selected for each orbit $\operatorname{Orb}(c)$ is not unique; any element $c' \in \operatorname{Orb}(c)$ is an equally valid choice. Furthermore, identifying ``28 representatives'' signifies strictly that the $2725$ coefficient vectors partition into 28 disjoint equivalence classes under $\AGL(2,3)$; it does not assert that these 28 states are convexly independent.

\textit{\textbf{Two-ququart case.}} The orbit-extraction algorithm described for the two-qutrit case applies directly to the two-ququart lattice case, with  \(\operatorname{ASp}(4,2)\) replacing \(\operatorname{AGL}(2,3)\). For each of the five ordered spectral types \(v(z)\), \(z=0,1,2,3,4\), we generate all distinct permutations of its sixteen coefficients. We then select an unclassified vertex, compute its complete orbit under \(\operatorname{ASp}(4,2)\), record one representative and the corresponding orbit size, and remove the entire orbit from the remaining vertices. Repeating this procedure until all vertices are classified produces respectively \((1,1,2,5,4)\) orbits for the five spectral types. Thus, the \(11036\) vertices decompose into exactly thirteen affine-symplectic orbits, whose representatives and multiplicities are listed in Table~\ref{tab:lattice-orbits}.
\section*{Acknowledgment}
The authors used OpenAI's ChatGPT (GPT-5.6 Sol) to verify selected arguments and steps of the logic, and to paraphrase parts of the manuscript for better readability. Claude Sonnet 5.5 was used to improve the style and readability of the \textit{Wolfram Mathematica} notebooks. The authors take full responsibility for all claims, calculations, and conclusions. 
\bibliographystyle{apsrev4-2}
\bibliography{mainbib}
\end{document}